\def\stress{\textit}
\newcommand{\ie}{i.e.~}
\newcommand{\eg}{e.g.~}
\theoremstyle{plain}
\newtheorem{theorem}             {Theorem}
\newtheorem{proposition}[theorem]{Proposition}
\newtheorem{lemma}      [theorem]{Lemma}
\newtheorem{corollary}  [theorem]{Corollary}
\newtheorem*{theorem*}    {Theorem}
\newtheorem*{proposition*}{Proposition}
\newtheorem*{lemma*}      {Lemma}
\newtheorem*{corollary*}  {Corollary}
\newtheorem*{conjecture*} {Conjecture}
\theoremstyle{definition}
\newtheorem{definition}[theorem]{Definition}
\newtheorem*{definition*}{Definition}
\newtheorem*{example*}   {Example}
\newtheorem*{question*}   {Question}
\newtheorem*{idea*}  {Idea}
\theoremstyle{remark}
\newcommand{\Mdot}{\text{ .}}
\newcommand{\defeq}{\colonequals} 
\newcommand{\setdef}  [2]{\left\{#1 \mid #2\right\}}             
\newcommand{\enset}   [1]{\mathopen{ \{ }#1\mathclose{ \} }} 
\renewcommand{\to}{\longrightarrow}
\newcommand{\fdec}    [3]{#1 \colon #2 \longrightarrow #3}
\newcommand{\Forall}[1]{\forall {#1}\boldsymbol{.}\;}
\newcommand{\es}{\varnothing}
\newcommand{\NN}{\mathbb{N}}
\newcommand{\RR}{\mathbb{R}}
\newcommand{\vc}[1]{\mathbf{#1}}
\newcommand{\Galpha}{{G / \alpha}}
\newcommand{\Eq}[1]{\mathsf{Eq}(#1)}
\newcommand{\Eqk}[2]{\mathsf{Eq}_{#2}(#1)}
\newcommand{\EqG}{\Eq{G}}
\newcommand{\EqGk}{\Eqk{G}{k}}
\newcommand{\EqGprime}{\Eq{G'}}
\newcommand{\Tr}{\mathrm{Tr}}
\newcommand{\STAB}{\mathrm{STAB}}
\newcommand{\chiS}{\chi_{_{\scriptsize S}}}
\newcommand{\chiW}{\chi_{_{\scriptsize W}}}
\newcommand{\algtab}{\hspace{1.5em}}
\newcommand{\psection}[1]{\paragraph*{#1.---}\hspace{-1em}}
\begin{document}

\newcommand{\inl}{INL -- International Iberian Nanotechnology Laboratory, Av. Mestre Jos\'{e} Veiga s/n, 4715-330 Braga, Portugal}
\newcommand{\inlshort}{INL -- International Iberian Nanotechnology Laboratory, Braga, Portugal}
\newcommand{\uff}{Instituto de F\'{i}sica, Universidade Federal Fluminense, Av. Gal. Milton Tavares de Souza s/n, Niter\'{o}i -- RJ, 24210-340, Brazil}
\newcommand{\uffshort}{Instituto de F\'{i}sica, Universidade Federal Fluminense, Niter\'{o}i -- RJ, Brazil}
\newcommand{\cfum}{Centro de F\'{i}sica, Universidade do Minho, Campus de Gualtar, 4710-057 Braga, Portugal}
\newcommand{\cfumshort}{Centro de F\'{i}sica, Universidade do Minho, Braga, Portugal}

\title{Inequalities witnessing coherence, nonlocality, and contextuality}

\author{Rafael Wagner}
\email{rafael.wagner@inl.int}
\affiliation{\inlshort}
\affiliation{\cfumshort}
\author{Rui Soares Barbosa}
\email{rui.soaresbarbosa@inl.int}
\affiliation{\inlshort}
\author{Ernesto F.~Galvão}
\email{ernesto.galvao@inl.int}
\affiliation{\inlshort}
\affiliation{\uffshort}

\date{\today}

\begin{abstract}
Quantum coherence, nonlocality, and contextuality are key resources for quantum advantage in metrology, communication, and computation.
We introduce a graph-based approach to derive classicality inequalities that bound local, noncontextual, and coherence-free models,
offering a unified description of these seemingly disparate quantum resources.
Our approach generalizes recently proposed basis-independent coherence witnesses, and recovers all noncontextuality inequalities of the exclusivity graph approach.
Moreover,  violations of certain classicality inequalities witness preparation contextuality. We describe an algorithm to find all such classicality inequalities, and use it to analyze some of the simplest scenarios.
\end{abstract}

\maketitle

\psection{Introduction}%
Non-classical resources provided by quantum theory are key to quantum advantage for information processing \cite{HowardWVE14,Abbot12,Raussendorf13,Abramsky17,Mansfield18,Saha19,lostaglio2020certifying,Kirby20}; see \cite{Budroni21,Brunner14,Streltsov17} for comprehensive reviews of applications.
Many different nonclassical features of quantum mechanics have been identified, studied, witnessed, and quantified
\cite{Winter16,Selby20,Amaral19,Duarte18,Wolfe20,Schmid20resource,KangDaWu21,Chitambar19,Abramsky17,Abramsky19,Barbosa23,wagner2021using,regula2017convex,regula2022tight,theurer2017resource,designolle2021set,regula2022probabilistic}.
It is natural to wonder to what extent different quantum resources can be characterized in a unified way.
Here we address this question by proposing a single formalism that yields inequalities bounding three different notions of classicality: noncontextual, local, and coherence-free models.

A number of modern approaches to contextuality have successfully incorporated nonlocality as a special case \cite{cabello2014graph,AbramskyB11,acin2015combinatorial,amaral2018graph}.
The relationship between this unified notion of non-classical correlations and coherence, however, has been harder to establish.
One roadblock is that most approaches to characterize coherence presuppose the choice of a fixed reference basis \cite{Streltsov17}.
Recently, different approaches have been proposed to study a basis-independent notion of coherence \cite{GalvaoB20, designolle2021set}, dubbed \stress{set coherence} in Ref.~\cite{designolle2021set}.
A recent approach, on which the present work builds, derives witnesses of basis-independent coherence using only relational information between states in the form of two-state overlaps~\cite{GalvaoB20}.
Still, so far there has been no clear identification between non-locality and contextuality on one hand, and coherence on the other.
There are examples of models that mimic quantum coherence \textit{without} displaying contextuality or nonlocality, such as the toy models from Refs.~\cite{spekkens2007evidence,catani2021interference}, while on the other hand incoherent states -- even maximally mixed states -- can of course be used to witness state-independent quantum contextuality~\cite{cabello2008experimentally,amselem2009state}.
Theory-independent approaches have been used to compare relevant types of nonclassical resources~\cite{takagi2019general,regula2017convex,regula2022probabilistic}, but an understanding of the special case of coherence and contextuality is still lacking.
A better understanding of the relationship between these two fundamental manifestations of nonclassicality 
has both important foundational impact and potential technological applications.

Building on the study of coherence using two-state overlaps \cite{GalvaoB20},
we propose a framework that associates to any (simple) graph $G$ a probability polytope $C_G$ of edge weightings.
Vertices of the graph $G$ represent probabilistic processes, while edges of $G$ correspond to correlations between neighbouring processes.
We show that the faces of the polytope $C_G$ describe bounds on noncontextual, local, and coherence-free models,
depending on the interpretation of vertices of the graph $G$ as preparations and measurements.
The description of three notions of classicality under a single framework represents a significant conceptual advance towards clarifying the source of quantum computational advantage.

\psection{The classical polytope $C_G$}%
Let $G = (V(G), E(G))$ be an undirected graph, which we call the \stress{event graph}.
We consider edge weightings $r \colon E(G) \to [0,1]$,
which assign a weight $r_e = r_{ij}$ to each edge $e=\{i,j\}$ of $G$.
We regard these weightings as points forming a polytope, the unit hypercube, $r \in [0,1]^{E(G)}$.
To define the \stress{classical polytope} $C_G \subseteq [0,1]^{E(G)}$,
take each vertex $i \in V(G)$ to represent a random variable $A_i$ with values belonging to an alphabet $\Lambda$,
and suppose these are jointly distributed.
This determines an edge weighting $r$ where each weight $r_{ij}$ is the probability that the processes corresponding to vertices $i$ and $j$ output equal values, \ie \[r_{ij} = P(A_i = A_j) .\]
An edge weighting $r$ is in the classical polytope $C_G$ if it arises in this fashion from jointly distributed random variables $(A_i)_{i \in V(G)}$.
Each weight $r_{ij}$ is then a measure of the correlation between the output values of $A_i$ and of $A_j$.
In the case of dichotomic values $\Lambda = \{ +1, -1\}$, this quantity is related to the expected value of the product by  $\left\langle A_i A_j\right\rangle = 2r_{ij}-1$ \footnote{Note that we do not assume a fixed finite outcome set $\Lambda$, or a bound on its size. The classical polytope consists of the edge weightings that arise from jointly distributed random variables with outcomes in \stress{some} set $\Lambda$. We could fix a single $\Lambda$ as long as it is countably infinite. But in practice, for a fixed graph $G$ with $n$ vertices, it suffices to consider $\Lambda = \enset{1, \ldots, n}$.}.
An (alternative) formal description of $C_G$ is given in detail in \Cref{app:algo}.

\psection{Inequalities defining $C_G$}%
The inequalities defining the polytope $C_G$ impose logical conditions determining the set of classical edge weightings.
The existence of non-trivial facets of $C_G$ can be illustrated with the example of \cref{fig:all_graphs_together}--(a), the 3-vertex complete graph $K_3$, with edge weights $r_{12}, r_{23}, r_{13}$.
We cannot have e.g.\
\[r_{12} = 1, \quad r_{23}=1, \quad r_{13}=0, \]
as this would contradict transitivity of equality on the deterministic values corresponding to each of the three vertices: $A_1 = A_2 = A_3 \neq A_1$.
In Ref.~\cite{GalvaoB20} it was shown that the only non-trivial inequalities for the $n$-cycle event graph $C_n$ are
\begin{equation}
    -r_{e} +\sum_{{e'} \neq {e}} r_{e'} \leq n-2,\;\; \text{for each $e \in E(C_n)$.} \label{eq:ncycle}
\end{equation}
Incidentally, these inequalities have been known at least since the work of Boole \cite{Boole1854, Pitowsky94,Abramsky2012logical}.

We now give a high-level description of an algorithm to completely characterize $C_G$ for general event graphs $G$.
We start by enumerating the vertices of $C_G$.
These are all the `deterministic' labellings of the edges of $G$ with values in $\{0,1\}$
that are logically consistent with transitivity of equality.
The facets of $C_G$ can then be found using standard convex geometry tools~\footnote{The inequalities found in this work were obtained using the \textbf{traf} option from the \textbf{PORTA} program, which converts a V-representation of a polytope into an H-representation.}.

Whether a given deterministic edge labelling is consistent -- and therefore a vertex of $C_G$ -- can be checked in linear time on the size of $G$ by a graph traversal.
However, it is unnecessary to generate all $2^{\lvert E(G) \rvert}$-many labellings and discard the inconsistent ones.
Instead, one can directly generate only the consistent ones by searching through underlying value assignments to the vertices of $G$.
Despite being much more efficient for most graphs, this also quickly becomes unavoidably intractable due to the exponentially-increasing number of vertices of the polytope $C_G$.
We deepen this discussion in \Cref{app:algo}.

Using the method just outlined, we find all facets of $C_G$ for some small graphs, including all graphs shown in \cref{fig:all_graphs_together}. 
Interestingly, already for $K_4$ (\cref{fig:all_graphs_together}--(c)) a new type of facet appears which is different from the cycle inequalities in \cref{eq:ncycle}.
These new facets of $K_4$ are described by the inequalities of the form
\begin{equation}
    (r_{12}+r_{13}+r_{14}) - (r_{23}+r_{34}+r_{24}) \le 1, \label{eq:k4nontrivial}
\end{equation}
(and others obtained by label permutations).

In \Cref{app:buildgraphs}, we prove that some constructions of graphs by combining smaller graphs do not give rise to new facet inequalities, trimming the class of graphs worth analyzing.
In \Cref{app:smallgraphs}, we list all facet inequalities of the classical polytopes for the complete graphs $K_4$, $K_5$, and $K_6$.
We also give numerically-found examples of quantum violations -- witnessing basis-independent coherence in the sense described in the next section --
of all non-trivial facets of $K_4$ and $K_5$.
All the new inequalities and quantum violations found, together with the code used to obtain them, which is applicable to analyze $C_G$ for an arbitrary graph $G$, are made available in an associated Git repository \cite{wagner2022github}.
In \Cref{app:family}, we generalize the inequalities of \cref{eq:k4nontrivial} 
to complete graphs of $n \ge 4$ vertices, and prove that these define facets of the classical polytopes $C_{K_n}$ for all such $n$.
This yields an infinite family $h_n$ of new classicality inequalities not previously described in the literature.
The first three new inequalities from this family  ($h_4$, $h_5$, $h_6$) have recently been experimentally violated, serving to benchmark quantum photonic devices \cite{giordani23experimental}.

We now proceed to describe how the inequalities obtained for the abstract scenarios considered above establish bounds both on coherence-free models and on noncontextual/local models.
Each type of operational scenario suggests an interpretation for edge weights, and naturally imposes further constraints on them, resulting in cross-sections of the polytope $C_G$.
These cross-sections recover known noncontextuality/locality polytopes, as well as basis-independent coherence witnesses.

\psection{$C_G$ bounds coherence-free models}%
Most commonly, coherence is defined for a quantum state with respect to a fixed basis, as the presence of nonzero off-diagonal elements in its density matrix (in that basis)~\cite{baumgratz2014quantifying,aberg2006quantifying}.
Recently, Refs.~\cite{GalvaoB20,designolle2021set} proposed a \stress{basis-independent} notion of coherence as a property of a \stress{set of states}:
this is said to be \stress{coherent} when the states in the set are not simultaneously diagonalizable, \ie when there is no basis in which all their density matrices are diagonal,
or equivalently, if the states in the set do not pairwise commute.
Otherwise, the set is said to be \stress{coherence-free}, or \stress{incoherent}.

In Ref.~\cite{GalvaoB20}, basis-independent coherence witnesses were described using only pairwise overlaps $r_{ij}=\Tr(\rho_i \rho_j)$ among a set of quantum states, focusing on witnesses provided by violations of the cycle inequalities in \cref{eq:ncycle}. 
We explain the interpretation of the facet inequalities of $C_G$ as basis-independent coherence witnesses, generalizing the results of Ref.~\cite{GalvaoB20} to \stress{any} event graph $G$.

Let $G$ be any graph with $n$ vertices.
Consider a general separable state of $n$ quantum systems of the same type (e.g. qudits), 
each associated to a vertex of the graph.
Each edge of $G$ is given a weight equal to the overlap between the two states of its incident vertices.
These overlaps can be estimated using the well-known SWAP test \cite{BuhrmanCWdW01}.
In Ref.~\cite{GalvaoB20} it was shown that the facet-inequalities of $C_G$ describe necessary conditions on the set of overlaps,
\ie on edge weightings of $G$, 
for the set of single-system states to be coherence-free, that is,
all of them diagonal in a common single-system basis.
This is so because for such a coherence-free set of states,
the overlap $r_{ij}$ equals the probability of obtaining equal outcomes in independent measurements of the states associated to vertices $i$ and $j$ using the observable that projects onto the reference basis.

\psection{$C_G$ bounds local and noncontextual models}%
The faces of $C_G$ can also be understood as bounds on noncontextual models~\cite{KochenS67,Bell64}.
A simple first approach consists in having vertices of $G$ represent measurements, while edges identify two-measurement contexts, i.e. pairs of observables that can be measured simultaneously.
The weight of an edge corresponds to the probability, with respect to a given global state, that the two incident measurements yield equal outcomes.
A necessary and sufficient condition for the existence of a noncontextual model whose behaviour is consistent with a given edge weighting
is the existence of a global probability distribution (on outcome assignments to all measurements) whose marginals recover the correct outcome probabilities.
This is the content of the Fine--Abramsky--Brandenburger theorem \cite{Fine82PRL,Fine82JMP,AbramskyB11}.

Such a global distribution, when it exists, can also be interpreted as a classical coherence-free model.
This dual role of global probability distributions is the link connecting coherence-free models and noncontextual models, and allowing violations of facet inequalities of $C_G$ to witness either property, depending on the interpretation of the scenario at hand.

In general, this simple approach,
interpreting vertices as measurements and edges as equality of outcome in two-measurement contexts,
is not sufficient to capture contextuality in full generality~\cite[Section 2.5.3]{amaral2018graph}.
Even restricting to contextuality scenarios whose maximal contexts have size two, the facets of $C_G$ are not necessarily facet, or even tight, noncontextuality inequalities,
except in the case of dichotomic measurements~\cite[Theorem 38]{araujo2014quantum}, where equality of outcomes fully determines the measurement statistics. An important example is the Clauser--Horne--Shimony--Holt (CHSH) inequality.

\begin{figure*}[thb]
    \centering
    \includegraphics[width=0.8\textwidth]{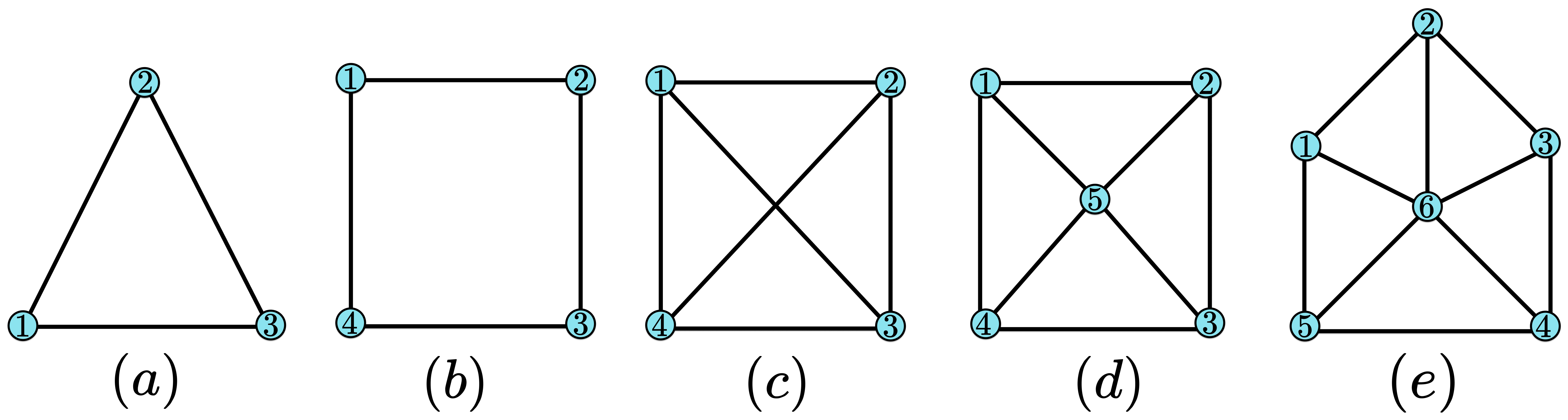}
    \caption{\textbf{Event graphs corresponding to bounds on classical models.} Each of these graphs can be used to obtain the following nonclassicality inequalities: (a) constrained CHSH inequality; (b),(d) CHSH Bell locality inequality; (c) new $K_4$ classicality inequality from \cref{eq:k4nontrivial}, and (e) Klyachko, Can, Binicio\u{g}lu, and Schumovsky (KCBS) noncontextuality inequality.}
    \label{fig:all_graphs_together}
\end{figure*}

Encoding some contextuality scenarios requires the imposition of further constraints, which geometrically determine cross-sections on the classical polytope $C_G$. These constraints may, for example, represent operational symmetries of the measurement scenario, e.g. making two vertices equal, or may encode given conditions on the compatibility of observables. One example is the exclusivity constraint present in the Cabello--Severini--Winter (CSW) graph approach~\cite{cabello2014graph}.

We now show how both CHSH and the original 3-setting Bell inequality can be obtained from cycle inequalities, before describing a more systematic approach that recovers all noncontextuality inequalities obtainable from the exclusivity graph approach~\cite{cabello2014graph,amaral2018graph}. 

We remark that we treat Bell nonlocality as an instance of contextuality, in which measurement compatibility is ensured by space-like separation between various parties who locally measure a shared multipartite system.
This view of nonlocality as a special case of contextuality is well established, e.g., in Refs.~\cite{AbramskyB11,acin2015combinatorial},
although there are important subtle differences when considering free transformations in a resource-theoretic setup~\cite{karvonen2021neither}.

\psection{Example: CHSH inequality from the 4-cycle graph $C_4$}%
It is easy to check from \cref{eq:ncycle} that the 4-cycle graph $C_4$ with edges $r_{12}, r_{23}, r_{34}, r_{14}$ (see \cref{fig:all_graphs_together}--(b)) has $4$ non-trivial facets given by the inequality
\begin{equation}
    r_{12}+r_{23}+r_{34}-r_{14} \le 2, \label{eq:4cycle}
\end{equation}
and label permutations thereof.
We translate this into the CHSH~\cite{ClauserHSH69} Bell scenario, with Alice locally measuring one of two rank-1 projectors $A_1$ or $A_2$, and Bob locally measuring either $B_1$ or $B_2$, on the singlet state $\ket{\psi}=\frac{1}{\sqrt{2}} (\ket{01}-\ket{10})$.
As a contextuality scenario, the CHSH graph $C_4$ is
a graph with no clique with more than two vertices,
and the only non-trivial noncontextuality inequality is given in terms of correlations.
From the event graph perspective, each vertex can be understood as a two-outcome measurement at either Alice or Bob.
It is easy to check that the overlap between two single-qubit projectors $A$, $B$ is the probability of obtaining different outcomes~\footnote{Equal outcomes and different outcomes are described in a dual way. Without loss of generality, we can use either the probability that the two measurements return equal outcomes or the probability that they return different outcomes. The forbidden deterministic edge labellings are essentially the same up to permuting $0$ and $1$.}
when measuring those projectors on each part of the singlet state: $r_{AB}=p_{\neq}^{AB}=1-p_{=}^{AB}$.
Using this interpretation, the facet of $C_{C_4}$ given by \cref{eq:4cycle} can be rewritten as
\begin{equation}
    \label{eq:eqchsh}
    p_{\neq}^{{A_1}{B_1}}+p_{\neq}^{{A_2}{B_1}}+p_{\neq}^{{A_2}{B_2}}-p_{\neq}^{{A_1}{B_2}} \le 2,
\end{equation}
which is a well-known way to write the CHSH inequality~\cite{collins2002bell}. This same procedure can be used to obtain chained Bell inequalities \cite{braunstein1990wringing,araujo2013all} from cycle inequalities.

\psection{Example: Original Bell inequality from the 3-cycle graph $C_3$}%
If on the $C_4$ graph we have just analyzed we impose the constraint that one of the edge weights equal 1,
we recover the non-trivial facets for the 3-cycle $C_3$, namely $r_{12}+r_{23}-r_{13} \le 1$ and label permutations.
The embedded tetrahedron with these 3 facets delimits the local correlations in the original two-party Bell inequality \cite{Bell64}, featuring three settings at each party, and assuming perfect anticorrelation for pairs of aligned settings. For a geometrical description of the elliptope of quantum correlations, see Ref.~\cite{Le2023quantumcorrelations}.

\psection{Example: CHSH inequality from the 5-vertex wheel graph $W_5$}%
An alternative way of interpreting an event graph as a contextuality scenario involves having a single vertex, the handle, represent a quantum state, and all the others represent measurement operators.
Take the 5-vertex wheel graph $W_5$ of \cref{fig:all_graphs_together}--(d) as an instructive example.
A simple calculation shows that if we impose $r_{12}=r_{34}$ and $r_{23}=r_{14}$,
then adding together four $3$-cycle inequalities for this graph recovers the CHSH inequality in the form of \cref{eq:eqchsh}.
The quantum realization of this graph scenario has the central vertex $5$ representing a singlet state, with the other vertices representing the four projectors measured jointly by Alice and Bob.
The imposed constraints reflect the fact that opposing edges represent the same quantity, the overlap between the two projectors locally measured by one of the parties.

\psection{Recovering all noncontextuality inequalities of the exclusivity graph formalism}%
The second approach to obtaining the CHSH inequality does not rely on particular properties of the singlet state.
The use of a handle vertex to represent a state can be generalized to other scenarios, as we now describe.

In the \stress{exclusivity graph approach} to contextuality one considers a graph $H$ whose vertices represent measurement events\footnote{One may think of a \stress{measurement event} as a pair $(m,o)$ describing that the measurement $m$ is performed and the outcome $o$ is observed.} (in a quantum realization, projection operators),
and edges connect mutually exclusive events (in the quantum setting, orthogonal projectors).
In this formalism, the noncontextual behaviours are described by a well-known construction, the stable polytope of the graph $H$, denoted $\STAB(H)$~\cite{amaral2018graph}.
This is reviewed in detail in \Cref{app:contextuality}.
In brief, the vertices or extreme points of the polytope $\STAB(H)$ are (the characteristic functions of) subsets of $V(H)$ that do not contain any pair of adjacent vertices.
More intuitively, perhaps, they correspond to truth-value assignments to the measurement events,
\ie functions $V(H) \to \enset{0,1}$,
such that no two exclusive events are deemed true,
\ie no two adjacent vertices are assigned the value $1$.

We can understand this setup in terms of our formalism as follows.
We define an event graph $H_\star$ obtained from the exclusivity graph $H$ by adding a new vertex connected to all other vertices.
This new vertex is used to represent a handle state $\psi$.
Formally, $H_\star$ is given by
$V(H_\star) \defeq V(H) \sqcup \{\psi\}$ and $E(H_\star) \defeq E(H)\cup \setdef{\{v,\psi\}}{v\in V(H)}$.
The structure of the exclusivity graph $H$ is then used to force a constraint on edge weightings of $H_\star$,
namely that all edges already present in $H$ be assigned zero weight.
The resulting cross-section
$C_{H_\star}^0 \defeq \setdef{r \in C_{H_\star}}{\Forall{e\in E(H)} r_e=0}$
of the polytope $C_{H_\star}$, which moreover is a subpolytope, then carries information about the noncontextual behaviours in $\STAB(H)$.
Formally, in \Cref{app:contextuality}, we exhibit an isomorphism between the polytopes $\STAB(H)$ and $C_{H_{\star}}^0$ for any exclusivity graph $H$.
As a consequence, we show that the \textit{facet-defining noncontextuality inequalities bounding noncontextual behaviours for $H$ are precisely the facet-defining inequalities of $C_{H_\star}^0$}.
Moreover, these inequalities can be obtained from the inequalities defining facets of the whole classical polytope $C_{H_\star}$ by removing (\ie setting to zero) the variables $r_e$ with $e \in E(H)$.

\psection{Example: KCBS noncontextuality inequality}%
We illustrate this mapping between formalisms with the noncontextuality inequality obtained by Klyachko, Can, Binicio\u{g}lu, and Schumovsky (KCBS) \cite{KlyachkoCBS08}, and expressed in the CSW formalism in Ref.~\cite{cabello2014graph}.

Starting with the 5-cycle graph $H = C_5$ interpreted as an exclusivity graph, then $H_*$ is
the 6-vertex wheel graph $W_6$ of \cref{fig:all_graphs_together}--(e).
The central vertex represents a quantum state,
while neighbouring vertices in the outer 5-cycle represent mutually exclusive measurement events
(quantum mechanically: orthogonal projectors)
so as to impose $r_{vw}=0$ for neighbouring $v$ and $w$ in this outer subgraph.
The KCBS noncontextuality inequality is a bound on weightings of the edges connected to the central vertex:
\begin{equation}\label{eq: KCBS event graph}
    \sum_{v=1}^{5} r_{v6} \le 2 .
\end{equation}
Note that each edge weight $r_{v6}$ in \cref{eq: KCBS event graph} is the probability of successful projection of the central vertex state onto the projector associated with vertex $v$.

In our framework, this inequality is obtained from a facet-defining inequality of $C_{W_6}$,
\[
-r_{12}-r_{23}-r_{34}-r_{45}-r_{15}+r_{16}+r_{26}+r_{36}+r_{46}+r_{56} \leq 2,
\]
by imposing the exclusivity (or orthogonality) condition of null edge weights on the 5-cycle outer subgraph.

\psection{Cycle inequalities witness preparation contextuality}%
Besides considering different approaches to Kochen--Specker noncontextuality, one can also consider different \textit{notions} of noncontextuality.
One such proposal, put forth by Spekkens in Ref.~\cite{spekkens2005contextuality}, is that of preparation (generalized) noncontextuality~\cite{Lostaglio2020contextualadvantage,schmid2018discrimination,Spekkens08,baldijao_emergence_2021,lostaglio2020certifying}.
We consider once more a quantum realization of the event graph representing vertices as states and edges as two-state overlaps.
In \Cref{app:preparationcontextuality} we prove that \textit{violations of the inequalities for the classical polytope of the cycle event graph $C_n$ are witnesses of preparation contextuality}.
This result is shown for a class of prepare-and-measure operational scenarios~\cite{Lostaglio2020contextualadvantage,schmid2018discrimination},
which includes quantum theory viewed as an operational theory.
In contrast to quantum theory, the well-known noncontextual toy theory of Ref.~\cite{spekkens2007evidence} does not violate these event graph inequalities,
if vertices of the event graph are taken to represent toy theory states. 

\psection{Discussion and future directions}%
We proposed a new graph-theoretic approach that unifies the study of three different quantum resources, namely contextuality, nonlocality, and coherence.
Non-classicality inequalities are obtained as facets of a polytope $C_G$ of edge weightings associated with an \textit{event graph} $G$, with suitable constraints that depend on the chosen interpretation of vertices as quantum states or measurements, as required by each scenario.  

Connections with the theory of contextuality were presented with respect to different approaches and definitions.
In particular, we recovered all inequalities of the CSW exclusivity graph approach \cite{cabello2014graph},
and we explicitly derived CHSH and KCBS inequalities as examples.
We also showed that for cycle graphs the classical polytope bounds Spekkens preparation noncontextuality. 

It would be interesting to understand whether these results can be made more robust.
In particular, we observed that the noncontextuality inequalities for exclusivity graphs $H$ are obtained from the inequalities of a classical polytope $C_{H_\star}$ by assigning weight zero to some edges. But many of these inequalities of $C_{H_\star}$ \textit{allow} for deviations from such null weights without leaving the classical polytope $C_{H_\star}$.
This suggests that perhaps those inequalities could still be interpreted as a robust form of noncontextuality inequalities, where exclusivity is relaxed.

Future research directions include characterizing this framework in the landscape of general probabilistic theories (GPTs) and understanding how this approach bounds relational unitary invariants involving three or more states, such as Bargmann invariants \cite{Oszmaniec21}.
It would also be interesting to relate violation of our inequalities with advantage in quantum protocols, as recently done by some of us in \cite{WagnerCG22} for the task of quantum interrogation.

\psection{Acknowledgements}%
We would like to thank Marcelo Terra Cunha, John Selby, David Schmid, and Raman Choudhary for helpful discussions.
We also thank Roberto D.~Baldijão for critically reviewing an early version of this work.

We acknowledge financial support from FCT -- Funda\c{c}\~ao para a Ci\^encia e a Tecnologia (Portugal) through PhD Grant SFRH/BD/151199/2021 (RW) and through CEECINST/00062/2018 (RSB and EFG).
This work was supported by the ERC Advanced Grant QU-BOSS, GA no. 884676.

\appendix

\section{Characterizing the classical polytope}\label{app:algo}

In Ref.~\cite{GalvaoB20}, Galv\~{a}o and Brod derived the facet-defining inequalities of the classical polytope $C_{C_n}$ of the $n$-cycle event graph $C_n$, as discussed in the text.
The construction uses an argument based on Boole's inequalities for logically consistent processes~\cite{Boole1854}.
In the main text we discuss that, in fact, \textit{any} event graph, and not only cycle graphs, can be used to bound classicality of different forms.

In this section, we consider the computational problem of characterizing the classical polytope $C_G$ for any event graph $G$.
We propose a simple algorithm for computing all its vertices and facets.
This proceeds by first calculating the list of vertices of $C_G$, \ie its V-representation,
and then finding its facet-defining inequalities, \ie its H-representation, using standard convex geometry tools.
As discussed in the main text, this last step is computationally efficient on the size of the polytope.
However, the overall efficiency of the procedure is intrinsically limited by the fact that the number of vertices and facets of $C_G$ grows exponentially on the size of $G$.
The brunt of this section is dedicated to computing the set of vertices of $C_G$.

After setting out the formal definitions, we characterize the edge $\enset{0,1}$-labellings
$E(G) \to \enset{0,1}$ that respect logical consistency conditions and thus correspond to the vertices of $C_G$.
This characterization yields an efficient procedure for checking whether such an edge labelling is a vertex of $C_G$, whose complexity we analyze.

However, when the goal is to generate all vertices of $C_G$, it is needlessly wasteful to generate all the $2^{|E(G)|}$-many edge $\{0,1\}$-labellings and then filter them one by one.
Instead, we present a procedure that generates the edge labellings that are vertices of $C_G$ by generating vertex labellings underlying them, thus limiting the search through the space $\enset{0,1}^{E(G)}$ of edge labellings.
Even though it might output the same vertex more than once, the method works well, especially for dense graphs. It is optimal for the complete graphs $K_n$, which as we will see in \Cref{app:smallgraphs} are our main examples of interest.
We observe that the number of vertices of the polytope $C_{K_n}$ is given by a well-known combinatorial sequence, known as the Bell numbers \cite{OEISBell}, which count the number of partitions of a set, precisely the space that is searched by this procedure.
Finally, we discuss an alternative method that might be more efficient for sparse graphs.

\psection{Basic definitions}%
We start with the relevant definitions.

\begin{definition}[Graph]\label{def: graph}
A \textit{graph} $G = (V(G), E(G))$
consists of a finite set $V(G)$ of vertices
and a set $E(G)$ of edges,
which are two-element subsets of $V(G)$, \ie sets of the form $\enset{v,w}$ where $v, w \in V(G)$ are distinct vertices.
\end{definition}
Note that the graphs we consider in this text are so-called \textit{simple} graphs:
they are undirected (since $\enset{v,w} = \enset{w,v}$), have at most one edge between any two vertices $v$ and $w$, and have no loops (i.e. have no edges from a vertex to itself).
In one well-delimited passage, however, we will need to consider \textit{possibly loopy graphs},
which may have loops. This corresponds to dropping the requirement that $v$ and $w$ be distinct in the definition above. A possibly loopy graph is said to be \textit{loop-free} if it has no loops, \ie if is is a bona fide (simple) graph.

\begin{definition}[Labellings and colouring]\label{def: vertex labelling and colouring}
A \textit{vertex labelling} by a set $\Lambda$,
or a \textit{vertex $\Lambda$-labelling} for short, is a function $\fdec{\lambda}{V(G)}{\Lambda}$ assigning to each vertex a label from $\Lambda$.
It is called a \textit{colouring} if $\{v,w\} \in E(G)$ implies $\lambda(v) \neq \lambda(w)$.
The graph $G$ is said to be $k$-colourable for $k \in \NN$ when it admits a colouring by a set of size $k$.

Similarly, an \textit{edge labelling}
by a set $\Lambda$, or an \textit{edge $\Lambda$-labelling} for short, is a function $\fdec{\alpha}{E(G)}{\Lambda}$ 
assigning a label from $\Lambda$ to each edge. When $\Lambda=[0,1]$, we call this an \textit{edge weighting}.
\end{definition}

\begin{definition}[Chromatic number]
The \textit{chromatic number} of a graph $G$, written $\chi(G)$, is the smallest $k \in \NN$ such that $G$ is $k$-colourable.
\end{definition}

In the classical, deterministic situation modelled by our framework, we consider a vertex labelling of a graph $G$ by an arbitrary labelling set $\Lambda$.
However, operationally, we do not have access to the vertex labels, but only to the information
of whether the labels of neighbouring edges are equal or different.
\begin{definition}
Given any vertex labelling
$\fdec{\lambda}{V(G)}{\Lambda}$,
its \textit{equality labelling}
$\epsilon_\lambda$ is
the edge $\enset{0,1}$-labelling 
given by:
\begin{align*}
    \fdec{\epsilon_\lambda&}{E(G)}{\enset{0,1}}
    \\
\epsilon_\lambda& \,\{v,w\} \,\defeq\,
\delta_{\lambda(v),\lambda(w)} = \begin{cases} 1 & \text{if $\lambda(v)=\lambda(w)$} \\ 0 & \text{if $\lambda(u) \neq \lambda(v)$}\end{cases}
\Mdot
\end{align*}
\end{definition}

We are interested in characterizing the edge $\enset{0,1}$-labellings that arise as equality labellings of vertex labellings.
\begin{definition}
An edge $\enset{0,1}$-labelling $\fdec{\alpha}{E(G)}{\enset{0,1}}$ is said to be
\textit{$\Lambda$-realizable}
if it is the equality labelling of some vertex $\Lambda$-labelling,
\ie if $\alpha = \epsilon_\lambda$ for some $\fdec{\lambda}{V(G)}{\Lambda}$.
If $\Lambda$ has size $k \in \NN$, we say that $\alpha$
is $k$-realizable.
\end{definition}

We write $\EqG$ for the set of realizable edge labellings of $G$ (with any $\Lambda$),
and $\EqGk$ for the set of $k$-realizable ones.
We have that $\EqGk \subseteq \Eqk{G}{k'}$ whenever $k \leq k'$,
and $\EqG = \cup_{k \in\NN}\EqGk$.
Moreover,
$\EqG = \Eqk{G}{|V(G)|}$ because a vertex labelling uses at most one distinct label per vertex of the graph.

We often refer to these realisable edge $\enset{0,1}$-labellings as the \textit{classical} edge labellings. 
By the inclusion $\{0,1\}\subseteq[0,1]$, we can think of any edge $\enset{0,1}$-labelling as a (deterministic) edge weighting.
This gives an alternative description of the classical polytope $C_G$ in the main text.

\begin{definition}
Given a graph $G$, its \textit{classical polytope} $C_G \subseteq [0,1]^{E(G)}$ is the convex hull of the set $\EqG$ seen as a set of points in $[0,1]^{E(G)}$.
\end{definition}

\psection{Characterizing the vertices of $C_G$}%
We now consider the question of determining whether a given
edge $\enset{0,1}$-labelling is realizable (as the equality labelling of some vertex labelling).

Given $\fdec{\alpha}{E(G)}{\enset{0,1}}$, define a relation
$\sim_\alpha$ on the set of vertices of $G$ whereby
$v \sim_\alpha w$ if and only if there is a path from $v$ to $w$ through edges labelled by $1$,
\ie
there is a sequence $e_1, \ldots, e_n \in E(G)$ such that $v \in e_1$, $w \in e_n$, $e_i \cap e_{i+1} \neq \es$, and $\alpha(e_i)=1$.
This is easily seen to be an equivalence relation.

It yields the following characterization of the classical edge labellings.
\begin{proposition}\label{prop:charclassicalvertices}
An edge labelling $\fdec{\alpha}{E(G)}{\enset{0,1}}$ is realizable (\ie classical)
if and only if for all edges $\enset{v,w} \in E(G)$, $v \sim_\alpha w$ implies $\alpha(\enset{v,w})=1$.
\end{proposition}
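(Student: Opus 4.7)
The statement characterises when an edge $\{0,1\}$-labelling $\alpha$ is realizable, and the proof naturally splits into the two implications, with the reverse direction providing an explicit construction of a witness vertex labelling.

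For the forward direction, my plan is to assume $\alpha = \epsilon_\lambda$ for some $\fdec{\lambda}{V(G)}{\Lambda}$, and show directly that the equivalence relation $\sim_\alpha$ is refined by equality of $\lambda$-values. Given any path $v = v_0, v_1, \ldots, v_n = w$ along edges $e_i = \{v_{i-1}, v_i\}$ with $\alpha(e_i) = 1$, the definition of $\epsilon_\lambda$ gives $\lambda(v_{i-1}) = \lambda(v_i)$ for every $i$, so by transitivity $\lambda(v) = \lambda(w)$. Consequently, whenever $\{v,w\} \in E(G)$ and $v \sim_\alpha w$, we obtain $\alpha(\{v,w\}) = \delta_{\lambda(v),\lambda(w)} = 1$, as required.

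For the reverse direction, I will take as labelling set $\Lambda$ the quotient $V(G)/{\sim_\alpha}$, and define $\fdec{\lambda}{V(G)}{\Lambda}$ by $\lambda(v) \defeq [v]_{\sim_\alpha}$. The task is then to verify $\epsilon_\lambda = \alpha$ on every edge $\{v,w\} \in E(G)$. If $\alpha(\{v,w\}) = 1$, then by definition of $\sim_\alpha$ (taking the length-one path consisting of the edge itself) we have $v \sim_\alpha w$, hence $\lambda(v) = \lambda(w)$ and $\epsilon_\lambda(\{v,w\}) = 1$. If instead $\alpha(\{v,w\}) = 0$, the hypothesis (in contrapositive form) forces $v \not\sim_\alpha w$, so $\lambda(v) \neq \lambda(w)$ and $\epsilon_\lambda(\{v,w\}) = 0$. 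Both cases agree with $\alpha$, establishing realizability.

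There is no substantial obstacle here: once one recognises that $\sim_\alpha$ is the equivalence relation generated by the $1$-labelled edges and that the quotient map is the canonical witness, both implications reduce to unpacking definitions. The only subtlety worth flagging is that $v \sim_\alpha w$ via a path in the graph does not a priori require $\{v,w\}$ itself to be an edge, which is why the condition in the statement is quantified over edges $\{v,w\} \in E(G)$; the forward direction uses transitivity of equality to bridge from an arbitrary path to a direct edge, while the reverse direction exploits that a direct edge with label $1$ is itself a length-one witness path.
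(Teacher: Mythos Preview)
Your proof is correct. The paper does not give a standalone proof of this proposition; instead it states the result and then, under the heading of ``a slightly different perspective,'' constructs the quotient graph $G/\alpha$ and proves a lemma establishing a bijection between $\Lambda$-realizations of $\alpha$ and $\Lambda$-colourings of $G/\alpha$, from which the proposition follows as the special case that $\alpha$ is realizable iff $G/\alpha$ is loop-free. Your argument is the same content stripped of the $G/\alpha$ machinery: your forward direction is exactly the ``propagating equality along the path'' step in the paper's lemma proof, and your reverse-direction witness $\lambda(v) = [v]_{\sim_\alpha}$ is the canonical colouring of $G/\alpha$ by its own vertex set. The paper's route buys the extra information that $k$-realizability is equivalent to $k$-colourability of $G/\alpha$ (hence NP-complete for $k\ge 3$), which your direct argument does not surface; but for the proposition as stated, your approach is cleaner.
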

In other words,  an edge labelling $\fdec{\alpha}{E(G)}{\enset{0,1}}$ fails to be realizable
precisely when 
there is an edge $\enset{v,w} \in E(G)$ such that $v \sim_\alpha w$ and $\alpha(\enset{v,w})=0$.
In terms of the underlying vertex labellings, such a situation would violate the transitivity of equality.

A slightly different perspective is given by using $\alpha$ to construct a new graph that `collapses' $G$ through paths labelled by $1$. Note that this construction yields a possibly loopy graph.

An edge $\enset{0,1}$-labelling $\alpha$ partitions the edges of $G$ into two sets.
This determines two graphs
 $G_{\alpha = 0}$ and $G_{\alpha = 1}$,
both with the same vertex set as $G$, but each retaining only the edges of $G$ with the corresponding label,
i.e. for each $\lambda \in \enset{0,1}$, 
\begin{align*}
    V(G_{\alpha = \lambda}) &\defeq V(G)\\
    E(G_{\alpha = \lambda}) &\defeq \setdef{e \in E(G)}{\alpha(e) = \lambda}
\end{align*}
A possibly loopy graph $G/\alpha$ is then defined as follows:
\begin{itemize}
    \item its vertices are connected components of $G_{\alpha = 1}$, or equivalently, the equivalence classes of $\sim_\alpha$;
    \item there is an edge between two connected components $A$ and $B$ of $G_{\alpha = 1}$ whenever there exist vertices $v \in A$, $w \in B$, such that $\{v,w\} \in E(G_{\alpha=0})$.
\end{itemize}

\begin{lemma}
Let $\fdec{\alpha}{E(G)}{\enset{0,1}}$ and $\Lambda$ be any set.
There is a one-to-one correspondence between $\Lambda$-realizations of $\alpha$ and $\Lambda$-colourings of $\Galpha$.
\end{lemma}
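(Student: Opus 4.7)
The plan is to construct the bijection explicitly in both directions and verify that the two assignments are mutually inverse. Given a $\Lambda$-realization $\fdec{\lambda}{V(G)}{\Lambda}$ of $\alpha$, I would first observe that $\lambda$ must be constant on each equivalence class of $\sim_\alpha$: along any $\sim_\alpha$-witnessing path, the $\alpha$-values of $1$ combined with $\epsilon_\lambda=\alpha$ force a chain of equalities $\lambda(v_i) = \lambda(v_{i+1})$. This lets me define $\fdec{c}{V(\Galpha)}{\Lambda}$ by $c(A) \defeq \lambda(v)$ for any representative $v \in A$. To see that $c$ is a colouring of $\Galpha$, note that any edge $\{A,B\}$ of $\Galpha$ arises from some $\{v,w\} \in E(G_{\alpha=0})$ with $v\in A$ and $w\in B$; then $\alpha(\{v,w\})=0$ forces $\lambda(v)\neq\lambda(w)$, hence $c(A)\neq c(B)$.

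Conversely, given a $\Lambda$-colouring $c$ of $\Galpha$, I would define $\lambda(v) \defeq c([v])$, where $[v]$ is the $\sim_\alpha$-class of $v$. Verifying $\epsilon_\lambda=\alpha$ splits into two cases for each $\{v,w\}\in E(G)$: if $\alpha(\{v,w\})=1$, then $v\sim_\alpha w$ so $[v]=[w]$ and $\lambda(v)=\lambda(w)$; if $\alpha(\{v,w\})=0$, then $\{[v],[w]\}$ is an edge of $\Galpha$ which must not be a loop (a colouring cannot assign distinct labels to a single vertex), so $\lambda(v)=c([v])\neq c([w])=\lambda(w)$. The two constructions are manifestly inverse: in either composition, the image of $v\in V(G)$ (resp.\ of $A\in V(\Galpha)$) traces through the class $[v]$ (resp.\ a representative of $A$) and back, with the label unchanged.

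The one subtle point, rather than a serious obstacle, is handling the possibly loopy nature of $\Galpha$. The cleanest way to absorb this is to note that whenever $\Galpha$ has a loop the bijection holds trivially: by \Cref{prop:charclassicalvertices} the labelling $\alpha$ is then non-realizable, so no $\Lambda$-realizations exist; and symmetrically no $\Lambda$-colouring of a loopy graph exists either, so both sides are empty. When $\Galpha$ is loop-free, the constructions above yield mutually inverse bijections between the non-empty sets of $\Lambda$-realizations of $\alpha$ and $\Lambda$-colourings of $\Galpha$.
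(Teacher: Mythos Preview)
Your proof is correct and follows essentially the same approach as the paper: define the map from realizations to colourings by passing to the quotient, define the inverse by precomposing with the quotient map, and verify both are well-defined with the required properties. Your explicit treatment of the loopy case and the careful observation that the existence of a colouring already forces $\Galpha$ to be loop-free are minor elaborations on the paper's argument rather than a different route; the appeal to \Cref{prop:charclassicalvertices} is harmless since that proposition is stated (and provable independently) before the lemma.
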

\begin{proof}
Let $\fdec{\lambda}{V(G)}{\Lambda}$ such that $\alpha = \epsilon_\lambda$.
If $v \sim_\alpha w$, then $\lambda(v)=\lambda(w)$, by propagating equality along the path labelled by $1$.
Hence, the map $\fdec{\kappa}{V(\Galpha)}{\Lambda}$ given by $\kappa([v])\defeq\lambda(v)$ is well defined.
Now, an edge $e \in E_\Galpha$
is of the form $e=\enset{[v],[w]}$ for some
$v, w \in V(G)$ such that $\alpha(\enset{v,w}) = 0$.
Since $\alpha = \epsilon_\lambda$, this means that $\lambda(v) \neq \lambda(w)$, hence $\kappa([v]) \neq \kappa([w])$. Thus, $\kappa$ is a colouring.

Conversely, given a colouring $\fdec{\kappa}{V_\Galpha}{\Lambda}$, set
$\lambda(v) \defeq \kappa([v])$. Let $e = \enset{v, w} \in E(G)$. 
If $\alpha(e)=1$, then $[v]=[w]$, hence $\lambda(v)=\lambda(w)$ because $\kappa$ is a colouring.
If $\alpha(e)=0$, then $\enset{[v],[w]}\in E_\Galpha$, hence $\lambda(v)\neq\lambda(w)$. In either case, $\alpha(e) = \epsilon_\lambda(e)$.

The two processes just described are inverses of one another.
\end{proof}

\begin{corollary}\label{cor:realizablecolourable}
An edge $\enset{0,1}$-labelling is $\Lambda$-realizable if and only if the possibly loopy graph $G/\alpha$ is $\Lambda$-colourable.
In particular, it is realizable (\ie classical) if and only if $G/\alpha$ is loop-free.
\end{corollary}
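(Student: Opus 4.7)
The plan is to derive both statements as essentially immediate consequences of the preceding lemma, which has already done the substantive work of establishing the bijection between $\Lambda$-realizations of $\alpha$ and $\Lambda$-colourings of $\Galpha$. The first equivalence is then a matter of passing from ``for some vertex labelling'' to ``for some colouring'': $\alpha$ is $\Lambda$-realizable iff a $\Lambda$-realization exists iff (by the bijection) a $\Lambda$-colouring of $\Galpha$ exists iff $\Galpha$ is $\Lambda$-colourable. No further argument is needed for this half.

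For the ``in particular'' statement, I would unwind the definition of realizability: $\alpha$ is realizable iff it is $\Lambda$-realizable for \emph{some} set $\Lambda$, which by the first part is equivalent to $\Galpha$ being $\Lambda$-colourable for some $\Lambda$. I then reduce this to loop-freeness in two short steps. First, suppose $\Galpha$ has a loop at some equivalence class $[v]$; this means there exist $v', w' \in [v]$ with $\{v',w'\} \in E(G_{\alpha = 0})$. Any putative colouring $\fdec{\kappa}{V(\Galpha)}{\Lambda}$ would require $\kappa([v]) \neq \kappa([v])$, which is absurd, so no colouring exists for any $\Lambda$. Conversely, if $\Galpha$ is loop-free, it is a bona fide simple graph, and such a graph always admits a colouring by a large enough set; concretely, one may take $\Lambda \defeq V(\Galpha)$ and define $\kappa$ as the identity map, noting that the loop-freeness of $\Galpha$ ensures $\kappa([v])\neq\kappa([w])$ whenever $\{[v],[w]\} \in E(\Galpha)$.

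Both steps are routine, and there is no real obstacle: the preceding lemma has already absorbed the only nontrivial content (propagation of equality along $1$-labelled paths, and the conversion between vertex labellings and colourings of the quotient). If anything, the most delicate point to state carefully is the distinction between ``$\Galpha$ is $\Lambda$-colourable for some $\Lambda$'' and ``$\Galpha$ is loop-free'', since for possibly loopy graphs these two notions coincide precisely because a loop is the unique obstruction to the existence of \emph{any} colouring.
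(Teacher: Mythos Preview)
Your proposal is correct and matches the paper's intent: the corollary is stated without proof, as an immediate consequence of the preceding lemma, and your argument spells out exactly the routine deductions the paper leaves implicit. The only remark is that the paper does not even write these steps down, so your treatment is more detailed than the original, but substantively identical.
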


\begin{proposition}
Checking whether an edge $\enset{0,1}$-labelling for a graph $G$ is realizable can be done in time $O(n+m)$ where $n = |V(G)|$ and $m = |E(G)|$.
Checking $k$-realizability in a given $k\geq 3$ is $NP$-complete.
\end{proposition}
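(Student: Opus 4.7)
The natural strategy is to reduce both parts to statements about the possibly loopy graph $G/\alpha$ via \Cref{cor:realizablecolourable}, since realizability is equivalent to loop-freeness of $G/\alpha$ and $k$-realizability is equivalent to $k$-colourability of $G/\alpha$.

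For the linear-time upper bound on realizability, I would invoke \Cref{prop:charclassicalvertices} and verify directly that no edge with label $0$ joins two vertices in the same $\sim_\alpha$-class. The algorithm has two passes. First, compute the connected components of the subgraph $G_{\alpha=1}$ using a standard BFS or DFS traversal, recording a component identifier for each vertex; since $G_{\alpha=1}$ has at most $n$ vertices and $m$ edges, this runs in $O(n+m)$ time. Second, iterate once through all edges $e = \enset{v,w}$ with $\alpha(e) = 0$ and reject if $v$ and $w$ carry the same component identifier; otherwise accept. The second pass is $O(m)$. Correctness is immediate from \Cref{prop:charclassicalvertices}: the components of $G_{\alpha=1}$ are exactly the equivalence classes of $\sim_\alpha$, so the rejection condition detects precisely the obstructions to realizability (equivalently, the loops of $G/\alpha$).

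For $k$-realizability with $k \geq 3$, membership in NP is direct: a vertex labelling $\fdec{\lambda}{V(G)}{\enset{1,\ldots,k}}$ is a polynomial-size certificate, and checking $\alpha = \epsilon_\lambda$ reduces to comparing $\lambda(v)$ and $\lambda(w)$ for each edge, which takes $O(n+m)$ time. For $NP$-hardness, I would reduce from the classical $k$-\textsc{Colouring} problem (known to be NP-complete for every fixed $k \geq 3$). Given an instance graph $H$, construct the event graph $G \defeq H$ together with the all-zero edge labelling $\alpha \equiv 0$. Under this choice, the equivalence classes of $\sim_\alpha$ are singletons, so $G/\alpha \cong H$ as loop-free graphs. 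By \Cref{cor:realizablecolourable}, $\alpha$ is $k$-realizable if and only if $G/\alpha = H$ is $k$-colourable, completing the reduction. The transformation is clearly polynomial-time computable.

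The only step requiring any subtlety is matching the definition of $\sim_\alpha$ (which reads like a walk-based non-reflexive relation) with the intended equivalence-class structure on $V(G)$ and ensuring that the trivial reduction $\alpha \equiv 0$ produces $G/\alpha$ isomorphic to the input graph $H$ rather than some collapsed quotient; this follows because no edge carries label $1$, so every $\sim_\alpha$-class is a singleton and no pair of distinct vertices of $H$ gets identified. No deeper obstacle arises: both parts are essentially bookkeeping on top of \Cref{cor:realizablecolourable} and the well-known complexity of graph colouring.
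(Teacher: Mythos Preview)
Your proposal is correct and follows essentially the same approach as the paper: compute the $\sim_\alpha$-classes via a traversal of $G_{\alpha=1}$ and check $0$-labelled edges against them for the linear-time bound, and reduce from $k$-\textsc{Colouring} via the all-zero labelling (so that $G/\alpha \cong H$) for NP-hardness. The only cosmetic differences are that the paper interleaves the $0$-edge checks into the DFS rather than doing a second pass, and that you spell out NP membership explicitly whereas the paper leaves it implicit.
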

\begin{proof}
For the first part, transverse the graph $G_{\alpha=1}$ using a depth-first search (DFS).
When visting each vertex, run through all the departing edges of $G_{\alpha=0}$ to see if any is linked to an already visited vertex in the connected component of $G_{\alpha=1}$ currently being traversed. If any is found, reject $\alpha$.

For the second part, use \cref{cor:realizablecolourable} to reduce to graph colouring:
a graph $G$ is $k$-colourable if and only if the constant $0$ edge labelling is realizable.
\end{proof}

The procedure outlined in the proof above is described below in more detail using pseudo-code.

\vspace{-10pt}
{\scriptsize
\begin{align*}
&\rule{\columnwidth}{.7pt} \\
&\textbf{Input: } \text{graph $G$ with } V(G) = \enset{1, \ldots, N}. \hspace{100cm}\\[-2pt]
&\hphantom{\textbf{Input: }} \text{edge-labelling } \fdec{\alpha}{E(G)}{\enset{0,1}} \\[-2pt]
&\textbf{Output: } \text{whether $\alpha$ is realizable, hence a vertex of the polytope $C_G$}. \\[-2pt]
&\\[-2pt]
&\textbf{global variable } d_i \textbf{ for each } i \in V(G) \\[-2pt]
&\textbf{global variable } c_i \textbf{ for each } i \in V(G) \\[-2pt]
& \\[-2pt]
&\textbf{procedure } \textsc{Main} ()\\[-2pt]
&\algtab d_i \leftarrow \textrm{false} \textbf{ for all } i \in V(G)\\[-2pt]
&\algtab \textbf{for } i \in V(G) \textbf{ do} \\[-2pt]
&\algtab\algtab \textbf{if } \lnot d_i \textbf{ then} \\[-2pt]
&\algtab\algtab\algtab c_j \leftarrow \textrm{false} \textbf{ for all } j \in V(G) \\[-2pt]
&\algtab\algtab\algtab \textsc{Search }(i) \\[-2pt]
&\algtab\algtab \textbf{end if} \\[-2pt]
&\algtab \textbf{end for} \\[-2pt]
&\algtab \textbf{terminate with output } \textrm{true}\\[-2pt]
&\\[-2pt]
&\textbf{procedure } \textsc{Search}(i) \\[-2pt]
&\algtab d_i, c_i \leftarrow \textrm{true} \\[-2pt]
&\algtab \textbf{for } j \in \textsc{Neighbours }(i) \textbf{ do} \\[-2pt]
&\algtab\algtab \textbf{if } \alpha(\enset{i,j})=0 \;\land\; c_j \textbf{ then}\\[-2pt]
&\algtab\algtab\algtab \textbf{terminate with output } \textrm{false}\\[-2pt]
&\algtab\algtab \textbf{else if } \alpha(\enset{i,j}=1) \;\land\; \lnot d_j \textbf{ then}\\[-2pt]
&\algtab\algtab\algtab \textsc{Search }(j) \\[-2pt]
&\algtab\algtab \textbf{end if} \\[-2pt]
&\algtab \textbf{end for} \\[-2pt]
&\vspace{-10pt}\rule{\columnwidth}{.7pt} 
\end{align*}
}
\vspace{-10pt}

\psection{Computing all the vertices of $C_G$}%
We conclude that it is computationally easy to check whether a given edge $\{0,1\}$-labelling,
\ie a given deterministic edge weighting, is classical.
Nevertheless, determining the whole set of vertices of the classical polytope
is computationally hard since the number of edge labelling to be tested grows exponentially with the number of edges of the graph.

It is interesting to note that for the complete event graph $K_n$ of $n$ vertices
the number of classical edge labellings, \ie vertices of the classical polytope $C_{K_n}$,
is given by a well-known sequence, the Bell or exponential numbers \cite{OEISBell,bell1934exponential}.
The $n$-th Bell number is the number of partitions, or equivalence relations, of a set of size $n$.
It is clear that edge $\enset{0,1}$-labellings of $K_n$ are in one-to-one correspondence with symmetric reflexive relations on the set of vertices $\enset{1, \ldots, n}$,
where the label of an edge $\enset{v,w}$ determines whether the pairs $(v,w)$ and $(w,v)$ are in the relation. Among these, the classical edge labellings correspond to the equivalence relations (which additionally satisfy transitivity), with the underlying vertex labelling determining a partition of the vertices.
For a general graph $G$, it is still true that the classical edge labellings arise from partitions, or equivalence relations, on the set of vertices, determined by the underlying vertex labelling. The difference is that an edge labelling does not carry enough information to characterize a relation fully.
So, in particular, different vertex partitions may give rise to the same classical edge labelling.

We can use this observation to propose a different method for generating all vertices of $C_G$ by
constructing vertex-labellings of $G$.
The procedure is given below in pseudo-code.

\vspace{-10pt}
{\scriptsize
\begin{align*}
&\rule{\columnwidth}{.7pt} \\
&\textbf{Input: } \text{graph $G$ with } V(G) = \enset{1, \ldots, N}. \hspace{100cm}\\[-2pt]
&\textbf{Output: } \text{vertices of the polytope $C_G$}. \\[-2pt]
&\\[-2pt]
&\textbf{global variable } \lambda_i \textbf{ for each } i \in V(G) \\[-2pt]
&\textbf{global variable } \alpha_e \textbf{ for each } e \in E(G) \\[-2pt]
& \\[-2pt]
&\textbf{procedure } \textsc{Main} ()\\[-2pt]
&\algtab\textsc{Generate }(1,1)\\[-2pt]
&\textbf{end procedure} \\[-2pt]
&\\[-2pt]
&\textbf{procedure } \textsc{Generate }(i,next) \\[-2pt]
&\algtab\textbf{if } i=N+1 \textbf{ then} \\[-2pt]
&\algtab\algtab \textbf{output } (\alpha_e)_{e \in E(G)} \\[-2pt]
&\algtab\textbf{else} \\[-2pt]
&\algtab\algtab \textbf{for } x < next \textbf{ do} \\[-2pt]
&\algtab\algtab\algtab \textsc{Update } (i,x) \\[-2pt]
&\algtab\algtab\algtab \textsc{Generate } (i+1,next) \\[-2pt]
&\algtab\algtab \textbf{end for} \\[-2pt]
&\algtab\algtab \textsc{Update } (i,next) \\[-2pt]
&\algtab\algtab \textsc{Generate } (i+1,next+1) \\[-2pt]
&\algtab\textbf{end if} \\[-2pt] 
&\textbf{end procedure} \\[-2pt]
&\\[-2pt]
&\textbf{procedure } \textsc{Update }(i,x) \\[-2pt]
&\algtab \lambda_i \leftarrow x \\[-2pt]
&\algtab \textbf{for } j < i \textbf{ with } \enset{i,j} \in E(G) \textbf{ do} \\[-2pt]
&\algtab \algtab \alpha_{\enset{i,j}} \leftarrow \textbf{if } \lambda_j = x \textbf{ then } 1 \textbf{ else } 0 \\[-2pt]
&\algtab \textbf{end for} \\[-2pt]
&\textbf{end procedure}\\[-2pt]
&\vspace{-10pt}\rule{\columnwidth}{.7pt} 
\end{align*}
}\vspace{-10pt}

The procedure above has the disadvantage that it might output the same vertex of the polytope multiple times.
This is because, as already discussed, different partitions of the vertices of $G$ can give rise to the same edge labelling.
The problem is especially noticeable for sparse graphs.

An alternative method for generating the vertices of $C_G$, which might be more efficient in the case of sparser graphs,
is to directly search through $\enset{0,1}^{E(G)}$ while checking for consistency on the fly, in order to trim the search space so that only the realizable edge labellings are constructed.
This can be done by keeping a representation of the current vertex partition (induced by the edges labelled $1$ in the edge labelling being constructed),
for example using a union-find data structure,
together with a record of forbidden merges between partition components (induced by the edges labelled $0$s in the edge labelling being constructed).
The disadvantage is that the upkeep of this representation, necessary for checking consistency on the fly,
cannot be done in constant time. This incurs an overhead at each step in the search.

\section{Characterizing classical polytopes \\ by graph decompositions}\label{app:buildgraphs}

In this section, we prove some general facts that relate the classical polytopes of different graphs.
In particular, we show that some methods of combining graphs to build larger graphs do not give rise to new classicality inequalities.
Or, seen analytically rather than synthetically, that some graphs $G$ can be decomposed into smaller component graphs in a way that reduces the question of characterizing $C_G$ to that of characterizing the polytopes of these components.
These observations help trim down the class of graphs that is worth analyzing in the search for new classicality inequalities.
As a by-product, we characterize the class of graphs for which
all edge weightings are classical as being that of trees, an analogue of Vorob{\textquotesingle}ev's theorem \cite{vorobev1962consistent} in this framework.

\begin{proposition}\label{prop:disjointunion}
Let $G_1$ and $G_2$ be graphs, and write $G_1 + G_2$ for their disjoint union. Then
\[C_{G_1 + G_2} = C_{G_1} \times C_{G_2} = \setdef{(r_1,r_2)}{r_1 \in G_1, r_2 \in G_2}.\]
\end{proposition}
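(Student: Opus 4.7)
The plan is to prove the equality by direct appeal to the definition of $C_G$ via jointly distributed random variables, exploiting the fact that the edge set of a disjoint union decomposes as $E(G_1 + G_2) = E(G_1) \sqcup E(G_2)$, so that edge weightings on $G_1 + G_2$ correspond bijectively to pairs $(r_1, r_2)$ with $r_1 \in [0,1]^{E(G_1)}$ and $r_2 \in [0,1]^{E(G_2)}$. The bulk of the argument then amounts to showing that such a pair is classical on $G_1 + G_2$ if and only if each component is classical on its respective graph.

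For the inclusion $C_{G_1 + G_2} \subseteq C_{G_1} \times C_{G_2}$, I would start from $r = (r_1, r_2) \in C_{G_1 + G_2}$, realized by a family of jointly distributed random variables $(A_i)_{i \in V(G_1 + G_2)}$ with $r_{\{i,j\}} = P(A_i = A_j)$. Restricting the joint distribution to the sub-family indexed by $V(G_1)$ (respectively $V(G_2)$) yields jointly distributed random variables that realize $r_1$ (respectively $r_2$), so $r_1 \in C_{G_1}$ and $r_2 \in C_{G_2}$. For the converse inclusion, given $r_1 \in C_{G_1}$ realized by $(A_i)_{i \in V(G_1)}$ and $r_2 \in C_{G_2}$ realized by $(B_j)_{j \in V(G_2)}$, I would take the product of the two distributions; since $G_1 + G_2$ has no edges linking a vertex of $G_1$ to a vertex of $G_2$, no cross-correlations are constrained, and the resulting family on $V(G_1) \sqcup V(G_2)$ realizes $(r_1, r_2)$.

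An alternative, more combinatorial route would be to invoke Proposition~\ref{prop:charclassicalvertices}: the vertices of $C_{G_1 + G_2}$ are the realizable edge $\{0,1\}$-labellings, and the relation $\sim_\alpha$ propagates only along paths of edges labelled $1$, which cannot cross between the disjoint components. Hence an edge $\{0,1\}$-labelling on $G_1+G_2$ is realizable precisely when each of its restrictions is, so the vertex set of $C_{G_1 + G_2}$ is the Cartesian product of the vertex sets of $C_{G_1}$ and $C_{G_2}$. Since the product of two polytopes is the convex hull of the Cartesian product of their vertex sets, the claim follows.

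I do not anticipate any real obstacle here; the only mild subtlety is being explicit that the two realizing families in the converse direction can be taken on independent probability spaces (or, equivalently, glued via a product measure) precisely because the disjoint union contributes no edge between $V(G_1)$ and $V(G_2)$ that would require a coupling.
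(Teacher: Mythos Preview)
Your proposal is correct. The paper's proof is closest to your alternative combinatorial route: it works at the level of polytope vertices by combining vertex labellings $\lambda_i \colon V(G_i) \to \Lambda_i$ into a single labelling $\lambda_1 + \lambda_2 \colon V(G_1) \sqcup V(G_2) \to \Lambda_1 \sqcup \Lambda_2$ and observing that its equality edge labelling is exactly the pair $(\epsilon_{\lambda_1}, \epsilon_{\lambda_2})$. Your primary route via joint distributions of random variables is an equally valid (and arguably more transparent) execution of the same idea, phrased directly at the level of arbitrary points rather than extreme points; the only real difference is bookkeeping.
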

\begin{proof}
Given vertex labellings $\fdec{\lambda_i}{V(G_i)}{\Lambda_i}$ for each $i=1,2$,
one obtains a function 
\[\fdec{\lambda_1+\lambda_2}{V(G_1) \sqcup V(G_2)}{\Lambda_1 \sqcup \Lambda_2}\]
which is a vertex labelling of $G_1+G_2$ since $V(G_1+G_2) = V(G_1)\sqcup V(G_2)$.
The corresponding equality edge labelling,
$\fdec{\epsilon_{\lambda_1+\lambda_2}}{E(G_1+G_2)}{\enset{0,1}}$,
is precisely the function
\[
 \fdec{[\epsilon_{\lambda_1},\epsilon_{\lambda_2}]}{E(G_1) \sqcup E(G_2)}{\enset{0,1}}
\]
given by
\[
 e \longmapsto
 \begin{cases}
    \epsilon_{\lambda_1}(e) & \text{ if $e \in E(G_1)$}
     \\
     \epsilon_{\lambda_2}(e) & \text{ if $e \in E(G_2)$}
 \end{cases},
\]
implying the result.
\end{proof}

In particular, vertices of the polytope $C_{G_1 + G_2}$ are in bijective correspondence with pairs consisting of one vertex from each of the polytopes $C_{G_i}$,
while the facets of $C_{G_1 + G_2}$ are in bijective correspondence with the union of the facets of $C_{G_1}$ and the facets of $C_{G_2}$.
That is, the inequalities defining $C_{G_1+G_2}$ are those defining $C_{G_1}$ plus those defining $C_{G_2}$.
Taking the disjoint union of event graphs thus creates no new classicality inequalities.
As a consequence, we might as well focus solely on studying the classical polytopes of connected graphs.

The result above considers the construction of a new graph by placing two graphs side by side.
But similar results can be obtained for more complicated ways of combining graphs,
namely gluing along a vertex or along an edge.

\begin{definition}[Gluing]
Given graphs $G_1$ and $G_2$, and tuples of vertices
\begin{align*}
\vc{v}_1&=(v_1^1, \ldots, v_1^k) \in V(G_1)^k,\\
\vc{v}_2&=(v_2^1, \ldots, v_2^k) \in V(G_2)^k,
\end{align*}
the \stress{gluing of $G_1$ and $G_2$ along $\vc{v}_1$ and $\vc{v}_2$},
written $G_1 +_{\vc{v}_1=\vc{v}_2} G_2$, is the graph obtained by taking the disjoint union $G_1+G_2$
and identifying the vertices $v_1^j$ and $v_2^j$ for $j = 1,\ldots, k$.
Explicitly: its vertices are
\[
    V(G_1 +_{\vc{v}_1=\vc{v}_2} G_2) \;\defeq\; O_1 \sqcup O_2 \sqcup N ,
\]   
where  $O_i \defeq  V(G_i) \setminus \{v_i^1, \ldots, v_i^k\}$  is the set of vertices of $G_i$ not being identified and
$N = \enset{v^1, \ldots v^k}$ is a set of `new' vertices (\ie not in either $G_i$);
its edges are
\[
E(G_1 +_{\vc{v}_1=\vc{v}_2} G_2) \;\defeq\; E_1 \cup E_2 ,
\]
where $E_i$ is equal to $E(G_i)$ but with occurrences of $v_i^j$ replaced by the new $v^j$.
\end{definition}

\begin{proposition}\label{prop:glue_vertex}
Let $G_1$ and $G_2$ be graphs, $v_1 \in V(G_1)$ and $v_2 \in V(G_2)$,
then $C_{G_1 +_{v_1=v_2} G_2} = C_{G_1} \times C_{G_2}$.
\end{proposition}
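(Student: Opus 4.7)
The plan is to reduce the statement to an equality at the level of the vertices (realisable edge labellings) of the two polytopes, and then invoke the distributivity of convex hull over Cartesian products. First I would observe that, because $v_1$ is identified with $v_2$ to form a single new vertex $v$, the edge set of the glued graph decomposes as a disjoint union $E_1 \sqcup E_2$ (the sets of edges inherited from $G_1$ and $G_2$ respectively; these cannot overlap because $|N| = 1$ here, so no edge lies entirely in $N$). This yields a coordinate-wise identification
\[
[0,1]^{E(G_1 +_{v_1 = v_2} G_2)} \;\cong\; [0,1]^{E(G_1)} \times [0,1]^{E(G_2)} \Mdot
\]
Since $\mathrm{conv}(A \times B) = \mathrm{conv}(A) \times \mathrm{conv}(B)$ for any finite point sets $A, B$, and each $C_G$ is the convex hull of $\Eq{G}$, it suffices to prove $\Eq{G_1 +_{v_1 = v_2} G_2} = \Eq{G_1} \times \Eq{G_2}$ under this identification.

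For the inclusion $\subseteq$, given any vertex labelling $\fdec{\lambda}{V(G_1 +_{v_1 = v_2} G_2)}{\Lambda}$, I would pull it back along the two natural inclusions $V(G_i) \hookrightarrow V(G_1 +_{v_1 = v_2} G_2)$ sending $v_i \mapsto v$ (and fixing every other vertex), obtaining vertex labellings $\fdec{\lambda_i}{V(G_i)}{\Lambda}$ whose equality labellings $\epsilon_{\lambda_i}$ agree by construction with the restriction of $\epsilon_{\lambda}$ to $E_i$.

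For the reverse inclusion $\supseteq$, the substantive task is to stitch two separate realisations $\fdec{\lambda_i}{V(G_i)}{\Lambda_i}$ of given $\alpha_i \in \Eq{G_i}$ into a single vertex labelling of the glued graph. The obstacle is that $\lambda_1(v_1)$ and $\lambda_2(v_2)$ carry \emph{a priori} unrelated labels, while equivalence classes of $\lambda_1$ and of $\lambda_2$ must otherwise remain distinct to avoid introducing spurious equalities across the two halves. I would resolve this by passing to the quotient $\Lambda \defeq (\Lambda_1 \sqcup \Lambda_2)/{\sim}$, where $\sim$ identifies only $\lambda_1(v_1)$ with $\lambda_2(v_2)$, and defining $\lambda(v) \defeq [\lambda_1(v_1)] = [\lambda_2(v_2)]$ together with $\lambda(w) \defeq [\lambda_i(w)]$ for $w \in O_i$. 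The key verification --- which is the step I expect to be most delicate --- is that each edge of the glued graph lies entirely within $O_i \cup \enset{v}$ for exactly one $i$, which is precisely where the \emph{single}-vertex nature of the gluing is essential; consequently $\lambda$ never compares a label originating in $\Lambda_1$ with one originating in $\Lambda_2$ on a single edge, and so $\epsilon_\lambda$ restricts to $\alpha_i$ on $E_i$ as required.
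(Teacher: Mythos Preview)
Your proposal is correct and follows essentially the same route as the paper: reduce to the level of realisable edge labellings and, for the harder inclusion, stitch together vertex labellings by passing to the quotient alphabet $(\Lambda_1 \sqcup \Lambda_2)/(\lambda_1(v_1)\sim\lambda_2(v_2))$. You are simply more explicit than the paper about the convex-hull-of-products step and about the easy inclusion~$\subseteq$, which the paper leaves implicit.
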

\begin{proof}
We proceed as in the proof of \cref{prop:disjointunion}, using the same notation,
but then take a quotient of the merged alphabet $\Lambda_1 \sqcup \Lambda_2$
identifying two labels, one from each component: $\lambda_1(v_1) \in \Lambda_1$ with $\lambda_2(v_2) \in \Lambda_2$.
This yields a well-defined labelling for $G_1 +_{v_1=v_2} G_2$ where the new vertex $v$ is labelled by the element resulting from this identification.
This does not affect the equality edge-labellings, and so we obtain the same result.
\end{proof}

Read analytically, if $G$ is a graph with a cut vertex $V$, \ie a vertex whose removal disconnects the graph into two components with vertex sets $V_1$ and $V_2$, then its polytope can be characterized in terms of the polytopes of the induced subgraph on $V_1 \cup \enset{v}$ and $V_2 \cup \enset{v}$.
In particular, the facet-defining inequalities of $C_G$ are those of each of these two components.

As an aside, this result is the missing ingredient for fully characterizing the event graphs
that cannot display any nonclassicality, \ie 
for which all edge weightings $E(G) \to [0,1]$ are classical.
This could be seen as an analogue of Vorob{\textquotesingle}ev's \cite{vorobev1962consistent} theorem in our framework.
\begin{corollary}
A graph $G$ is such that $C_G = [0,1]^{E(G)}$ if and only if it is a tree.
\end{corollary}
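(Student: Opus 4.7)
The plan is to prove the two implications separately, using \cref{prop:glue_vertex} for the forward direction and the cycle inequalities of \cref{eq:ncycle} for the reverse.

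First I would handle the forward implication by induction on the number of vertices. The base case is a single edge $K_2$: its classical polytope is the full interval $[0,1]$, since the two deterministic vertex labellings (same label vs.\ distinct labels on the two endpoints) yield the edge weightings $1$ and $0$, respectively, whose convex hull is $[0,1]$. For the inductive step, any tree $T$ with at least two vertices contains a leaf $\ell$; removing $\ell$ together with its incident edge yields a smaller tree $T'$, and $T$ is recovered as the gluing of $T'$ with $K_2$ along the former neighbour of $\ell$. Applying \cref{prop:glue_vertex} with the inductive hypothesis $C_{T'}=[0,1]^{E(T')}$ and the base case $C_{K_2}=[0,1]$ yields $C_T = [0,1]^{E(T')} \times [0,1] = [0,1]^{E(T)}$. (To extend the statement to disconnected acyclic graphs, i.e.\ forests, \cref{prop:disjointunion} provides the analogous step.)

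Next I would handle the reverse implication contrapositively: if $G$ is not a tree, fix a cycle subgraph $C \subseteq G$ of length $n \geq 3$. I would define $\fdec{r}{E(G)}{[0,1]}$ by setting $r_{e_0}=0$ for a chosen edge $e_0 \in E(C)$, $r_e=1$ for the remaining $n-1$ edges of $C$, and $r_e$ arbitrary (say $1$) on all other edges of $G$. The restriction of $r$ to $E(C)$ evaluates the left-hand side of \cref{eq:ncycle} (with distinguished edge $e_0$) to $(n-1)-0 = n-1 > n-2$, so the cycle inequality is violated. On the other hand, any vertex labelling of $G$ restricts to a vertex labelling of $V(C)$ whose equality labelling on $E(C)$ coincides with the restriction; hence classical edge weightings of $G$ restrict to classical edge weightings of $C$. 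It follows that $r \in [0,1]^{E(G)} \setminus C_G$.

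The main subtlety, rather than a genuine obstacle, is to note that the cycle $C \subseteq G$ need not be an \emph{induced} subgraph: chords of $C$ in $G$ are harmless for the argument, because we only project onto the edges of the cycle itself and never need to control the weights on the chords. Beyond this observation, the proof is a direct assembly of the two preceding gluing propositions together with the cycle inequality of Ref.~\cite{GalvaoB20}, and no essentially new idea is required.
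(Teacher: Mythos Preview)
Your proof is correct and follows essentially the same approach as the paper: the forward direction is identical (build the tree as repeated vertex-gluings of copies of $K_2$, invoking \cref{prop:glue_vertex}), and for the reverse direction the paper exhibits exactly your point $(1,\ldots,1,0)$ on a cycle and asserts it is not in $C_G$. The only cosmetic difference is that the paper justifies this exclusion directly via non-realizability (transitivity of equality fails), whereas you route through the cycle inequality of \cref{eq:ncycle}; these are equivalent, and your remark about chords being harmless is a nice clarification that the paper leaves implicit.
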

\begin{proof}
For `only if' part, if $G$ has a cycle then any edge labelling $E(G) \to \enset{0,1}$ that restricts to $(1,\ldots,1,0)$ on said cycle is not in $C_G$.
For the `if' part,
apply \cref{prop:glue_vertex} multiple times, following the construction of a tree as a sequence of gluings along a vertex of copies of $K_2$, whose classical polytope is $[0,1]$.
\end{proof}

\begingroup
\squeezetable
\begin{table*}[thb]
\caption{Quantum violations for facet inequalities of $C_{K_5}$\label{tab: inequalities K5}} 
\begin{ruledtabular}
\begin{tabular}{ccccc}
Class & Violation &\multicolumn{2}{l}{Inequality Representative for the Class}& Dimension \\
\hline 
11--40 & 1/4 &\multicolumn{2}{l}{$-r_{12}+r_{15}+r_{25}\leq 1$}&2\\
41--60 & 1/3 &\multicolumn{2}{l}{$+r_{15}+r_{25}+r_{35}-(r_{12}+r_{13}+r_{23})\leq 1$}&3\\
61--65 & 0.243 &\multicolumn{2}{l}{$+r_{12}+r_{13}+r_{14}+r_{15}-(r_{23}+r_{24}+r_{25}+r_{34}+r_{35}+r_{45})\leq 1$}&4\\
66--75 & 0.312 &\multicolumn{2}{l}{$+r_{12}+r_{14}+r_{15}+r_{23}+r_{34}+r_{35}-(r_{13}+r_{24}+r_{25}+r_{45})\leq 2$}&3\\
76--87 & $0.795$ &\multicolumn{2}{l}{$+r_{12}+r_{15}+r_{23}+r_{34}+r_{45}-(r_{13}+r_{14}+r_{24}+r_{25}+r_{35})\leq 2$}&2\\
88--92 & 0.344 &\multicolumn{2}{l}{$+2r_{12}+2r_{23}+2r_{24}+2r_{25}-(r_{13}+r_{14}+r_{15}+r_{34}+r_{35}+r_{45})\leq 3$}&4\\
93--152 & 0.688 &\multicolumn{2}{l}{$+r_{13}+r_{14}+2r_{24}+r_{34}+2r_{45}-(2r_{12}+2r_{25}+2r_{35})\leq 3$}&3\\
153--212 & 0.7306 &\multicolumn{2}{l}{$+2r_{12}+2r_{14}+2r_{15}+r_{23}+r_{35}-(2r_{13}+2r_{24}+r_{25}+2r_{45})\leq 3$}&2\\
213--242 & 0.855 &\multicolumn{2}{l}{$+2r_{13}+2r_{14}+2r_{23}+2r_{24}+3r_{35}+3r_{45}-(2r_{12}+4r_{15}+4r_{25}+r_{34})\leq 5$}&3\\
\end{tabular}
\end{ruledtabular}
\end{table*}
\endgroup

We now move to consider gluing along an edge. 

\begin{proposition}
Let $G_1$ and $G_2$ be graphs, $v_1,w_1 \in V(G_1)$ and $v_2,w_2 \in V(G_2)$ such that $e_i \defeq \enset{v_i,w_i} \in E(G_i)$. Writing
\[G \defeq G_1 +_{(v_1,w_1)=(v_2,w_2)} G_2,\]
we have
\begin{align*}
    C_{G} =& \setdef{r \in [0,1]^{E(G)}}{r|_{E(G_1)} \in C_{G_1}, r|_{E(G_2)} \in C_{G_2}}
    \\ \cong& \setdef{(r,s)}{r \in C_{G_1}, s \in C_{G_2}, r_{e_1} = s_{e_2}} 
    \\ \cong& (C_{G_1} \times [0,1]^{E(G_2)\setminus\enset{e_2}}) \cap ([0,1]^{E(G_2)\setminus\enset{e_1}} \times C_{G_2}) ,
\end{align*}
where for the last line we assume that $C_{G_1}$ is written with $e_1$ as its last coordinate and $C_{G_2}$ with $e_2$ as its first coordinate.
\end{proposition}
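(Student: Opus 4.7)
My plan is to focus on establishing the first equality, since the two subsequent ``$\cong$'' lines are purely notational rearrangements that re-express $r \in [0,1]^{E(G)}$ as a pair of restrictions $r|_{E(G_1)}$ and $r|_{E(G_2)}$ sharing the coordinate indexed by the glued edge $e$. The first equality splits into two inclusions. For the forward one, $C_G \subseteq \setdef{r}{r|_{E(G_i)} \in C_{G_i}}$, I would mirror the argument in the proof of \cref{prop:glue_vertex}: any vertex labelling $\fdec{\lambda}{V(G)}{\Lambda}$ restricts to vertex labellings $\fdec{\lambda_i}{V(G_i)}{\Lambda}$ via the natural inclusions that send $v_i, w_i$ to the images of the glued $v, w$, and $\epsilon_{\lambda_i}$ coincides with the restriction of $\epsilon_\lambda$ to $E(G_i)$. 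Extending by convexity, every $r \in C_G$ satisfies $r|_{E(G_i)} \in C_{G_i}$.

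The reverse inclusion is the heart of the argument. Given $r$ with classical restrictions, set $p \defeq r_e$ and decompose each restriction into vertices of $C_{G_i}$, say $r|_{E(G_i)} = \sum_{\alpha} p^i_\alpha r^i_\alpha$. Since every vertex of $C_{G_i}$ assigns $e_i$ a value in $\enset{0,1}$, partition the indices as $I^i_b \defeq \setdef{\alpha}{(r^i_\alpha)_{e_i} = b}$, noting that $\sum_{\alpha \in I^i_1} p^i_\alpha = p$ for both $i=1,2$ (these are two expressions for the shared coordinate $p$). For each compatible pair $(\alpha,\beta) \in I^1_b \times I^2_b$, I would glue the underlying vertex labellings $\lambda^1_\alpha, \lambda^2_\beta$ into a vertex labelling of $G$ by starting from disjoint alphabets $\Lambda_1 \sqcup \Lambda_2$ and then quotienting to identify $\lambda^1_\alpha(v_1) \sim \lambda^2_\beta(v_2)$ and $\lambda^1_\alpha(w_1) \sim \lambda^2_\beta(w_2)$. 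The matching values on the shared edge make this quotient consistent: when $b=1$ both identifications collapse into a single one, and when $b=0$ all four labels stay pairwise distinct.

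This yields a classical edge labelling $r^{(b)}_{\alpha\beta} \in C_G$ that restricts to $r^1_\alpha$ on $E(G_1)$ and $r^2_\beta$ on $E(G_2)$. Assigning the weight $p^1_\alpha p^2_\beta / p$ to pairs in $I^1_1 \times I^2_1$ and $p^1_\alpha p^2_\beta / (1-p)$ to pairs in $I^1_0 \times I^2_0$ gives total mass $p+(1-p)=1$, and a coordinate-by-coordinate check — separately for edges in $E(G_1)\setminus\enset{e}$, in $E(G_2)\setminus\enset{e}$, and for $e$ itself — confirms that the resulting convex combination reproduces $r$, with the telescoping factors $p/p$ and $(1-p)/(1-p)$ collapsing the cross-sums over the unused index. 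Degenerate cases $p\in\enset{0,1}$ reduce to one of the index sets being empty. The main obstacle I anticipate is verifying that the alphabet quotient does not create spurious equalities inside either component that would corrupt the equality edge labellings; this ultimately works because no edge of $G$ joins a vertex of $V(G_1)\setminus\enset{v_1,w_1}$ to one of $V(G_2)\setminus\enset{v_2,w_2}$, so any coincidental match of labels across the two components is invisible to $\epsilon_\lambda$.
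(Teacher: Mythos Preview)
Your argument is correct and follows the same route as the paper: glue vertex labellings by quotienting $\Lambda_1 \sqcup \Lambda_2$ via the identifications $\lambda^1_\alpha(v_1)\sim\lambda^2_\beta(v_2)$ and $\lambda^1_\alpha(w_1)\sim\lambda^2_\beta(w_2)$, which is consistent exactly when the two equality labellings agree on the shared edge. The paper's proof is a two-line sketch that stops at this vertex-level bijection and says ``the result follows''; your explicit convex decomposition with weights $p^1_\alpha p^2_\beta/p$ and $p^1_\alpha p^2_\beta/(1-p)$ spells out the passage from extreme points to general points of the right-hand side that the paper leaves to the reader.

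One small quibble: your final sentence misdiagnoses why the quotient preserves the equality labelling on each $E(G_i)$. The absence of cross-component edges is not the operative fact here; what matters is that each equivalence class of the quotient contains at most one element of $\Lambda_1$ (and at most one of $\Lambda_2$), so two distinct labels in $\Lambda_1$ remain distinct after quotienting, and hence $\epsilon_\lambda|_{E(G_1)} = \epsilon_{\lambda^1_\alpha}$ as required. Your stated justification only dispatches the harmless possibility that an $O_1$-vertex and an $O_2$-vertex end up with the same label, which is a separate and easier point.
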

\begin{proof}
The proof is similar to that of \cref{prop:glue_vertex}, but now we are forced to make two identifications between elements of $\Lambda_1$ and of $\Lambda_2$ in $\Lambda_1 \sqcup \Lambda_2$.
When $\lambda_1$ and $\lambda_2$ are such that 
$\epsilon_{\lambda_1}(e_1) = \epsilon_{\lambda_2}(e_2)$,
\ie such that
\[ \lambda_1(v_1) = \lambda_1(w_1) \;\;\Leftrightarrow\;\; \lambda_2(v_2) = \lambda_2(w_2),\]
then this yields a well-defined vertex labelling of $G$ and the result follows.
\end{proof}

Note that the result is not quite as strong as \cref{prop:disjointunion,prop:glue_vertex}.
While the inequalities of $C_{G_1}$ plus those of $C_{G_2}$ form a complete set of inequalities for the classical polytope of the resulting graph $G_1 +_{(v_1,w_1)=(v_2,w_2)} G_2$, this is not necessarily a minimal set.

\begin{proposition}\label{prop:subgraph}
Let $G$ be a graph and $G'$ be a subgraph of $G$ on the same set of vertices,
\ie $V(G') = V(G)$ and $E(G') \subseteq E(G)$.
Then $C_{G}$ is a subpolytope of $C_{G'} \times [0,1]^{E(G)\setminus E(G')}$.
\end{proposition}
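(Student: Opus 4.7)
The plan is to exploit the fact that $V(G') = V(G)$: a vertex labelling of $G$ is the same thing as a vertex labelling of $G'$, and restriction to $E(G')$ commutes with taking equality labellings. Concretely, I would first observe that since the edges of $G$ partition as $E(G) = E(G') \sqcup (E(G) \setminus E(G'))$, the natural map
\[
\fdec{\Phi}{[0,1]^{E(G)}}{[0,1]^{E(G')} \times [0,1]^{E(G) \setminus E(G')}} \colon\colon r \longmapsto \bigl(r|_{E(G')},\, r|_{E(G) \setminus E(G')}\bigr)
\]
is an affine isomorphism of polytopes. So the statement reduces to showing $\Phi(C_G) \subseteq C_{G'} \times [0,1]^{E(G)\setminus E(G')}$.

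Next, I would verify the inclusion on the vertices of $C_G$ and extend by convexity. Let $r$ be a vertex of $C_G$; by \Cref{prop:charclassicalvertices}, it is the equality labelling $\epsilon_\lambda$ of some vertex labelling $\fdec{\lambda}{V(G)}{\Lambda}$. Because $V(G')=V(G)$, the very same function $\lambda$ is a vertex labelling of $G'$, and its equality labelling $\epsilon_\lambda^{G'}$ on $G'$ is exactly the restriction of $\epsilon_\lambda^{G}$ to $E(G')$. Hence $r|_{E(G')} \in \EqGprime \subseteq C_{G'}$, while $r|_{E(G)\setminus E(G')} \in \enset{0,1}^{E(G)\setminus E(G')} \subseteq [0,1]^{E(G)\setminus E(G')}$. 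So $\Phi(r) \in C_{G'} \times [0,1]^{E(G)\setminus E(G')}$.

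Since $C_G$ is the convex hull of its vertices, and the target $C_{G'} \times [0,1]^{E(G)\setminus E(G')}$ is convex, the inclusion $\Phi(C_G) \subseteq C_{G'} \times [0,1]^{E(G)\setminus E(G')}$ follows. Identifying $C_G$ with its image under the isomorphism $\Phi$ yields the claim.

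There is really no hard step here: the argument is essentially the observation that the notion of ``classicality'' for an edge weighting is defined via an underlying joint distribution on the vertices, and restricting the set of edges one looks at cannot destroy the existence of such a joint distribution. The only mild subtlety worth flagging is that this inclusion is generally strict, since edge weights on $E(G) \setminus E(G')$ are constrained by consistency with those on $E(G')$ through the additional cycles that $G$ may contain but $G'$ does not.
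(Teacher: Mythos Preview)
Your proof is correct and follows essentially the same approach as the paper's: both arguments rest on the observation that a vertex labelling $\lambda$ of $G$ is automatically a vertex labelling of $G'$ (since $V(G')=V(G)$), so that restricting the equality labelling $\epsilon_\lambda$ to $E(G')$ lands in $\EqGprime$, giving the vertex-to-vertex inclusion $\EqG \subseteq \EqGprime \times \enset{0,1}^{E(G)\setminus E(G')}$. Your presentation is slightly more explicit about the affine identification $\Phi$ and the passage from vertices to the full polytope by convexity, but the substance is identical.
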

\begin{proof}
We need to show that the vertices of $C_{G}$ constitute a subset of the vertices of $C_{G'} \times [0,1]^{E(G)\setminus E(G')}$,
\ie that $\EqG \subseteq \EqGprime \times \enset{0,1}^{E(G)\setminus E(G')}$.
Given a classical edge labelling of $G$, \ie an edge labelling of the form 
$\epsilon_\lambda$ for some vertex labelling $\fdec{\lambda}{V(G)}{\Lambda}$,
we can regard $\lambda$ as a vertex labelling of $G'$ and
conclude that its equality labelling is simply the restriction of $\fdec{\epsilon_\lambda}{E(G)}{\enset{0,1}}$ to the subset $E(G')$ of its domain.
\end{proof}

In particular, $C_{K_n}$ is a subpolytope of $C_G$ for any event graph $G$ with $n$ vertices.

\section{Classical polytope facets and quantum violations for small graphs}\label{app:smallgraphs}

In this section, we study the facet-defining inequalities of some small graphs.
In particular, we analyze and classify the facet-defining inequalities for the classical polytopes $C_G$ corresponding to complete event graphs of 4 and 5 vertices ($G=K_4$ and $G=K_5$, respectively).
We also find quantum violations of these inequalities with pure states that are sampled from the set of quantum states.
For sampling we used the Python library \textrm{QuTip}~\cite{johansson2012qutip}.

Ref.~\cite{GalvaoB20} gave a complete characterization of the classical polytope of the graph $K_3 = C_3$,
the smallest graph with non-trivial inequalities,
together with a characterization of its maximal quantum violations, as well as applications.
More generally, Ref.~\cite{GalvaoB20} gave the complete set of inequalities for the classical polytope of the cycle graphs $C_n$, which take the very simple form in \cref{eq:ncycle}.
Here, we move to consider graphs with more than three edges and which are not cycles.

\psection{Facet-defining inequalities for small complete graphs}%
The facet-defining inequalities of the classical polytope of the graph $C_4$ (the $4$-cycle) are those of the form given by the CHSH inequality mentioned in the main text.
If we add one more edge to this graph, the corresponding polytope ends up being described by $3$-cycle inequalities alone.
Therefore, the first interesting graph yielding non-trivial and non-cycle inequalities is $K_4$, the complete graph of 4 vertices.
The classical polytope of this graph has facets defined by $3$- and $4$-cycle inequalities, together with facets defined by the new inequalities described in \cref{eq:k4nontrivial} in the main text,
\ie those of the form
\[ (r_{12}+r_{13}+r_{14})-(r_{23}+r_{34}+r_{24}) \le 1 . \]

This inequality has a structure that is present for all $K_n$ graphs, as will be discussed in \Cref{app:family}.
Since complete graphs have all possible edges, these are the graphs that impose the largest number of non-trivial constraints on edge assignments, as per \cref{prop:subgraph}.
Therefore, it is natural to look at those graphs first.

We addressed the complete characterization of the classical polytopes of complete graphs,
proceeding as far as the computational complexity of the problem allowed.
In particular, we found complete sets of facet-defining inequalities for $C_{K_5}$ and $C_{K_6}$.
The polytope $C_{K_5}$ has 52 vertices and 242 facets. These facets fall are divided into 9 symmetry classes. Representative inequalities from each of these classes are shown in \cref{tab: inequalities K5}.
The polytope $C_{K_6}$ has 203 vertices and requires 50,652 inequalities.
A list of inequalities and Python code used to obtain them can be found in Ref.~\cite{wagner2022github}.

\begin{figure*}[thb]
    \centering
    \includegraphics[width=.99\textwidth]{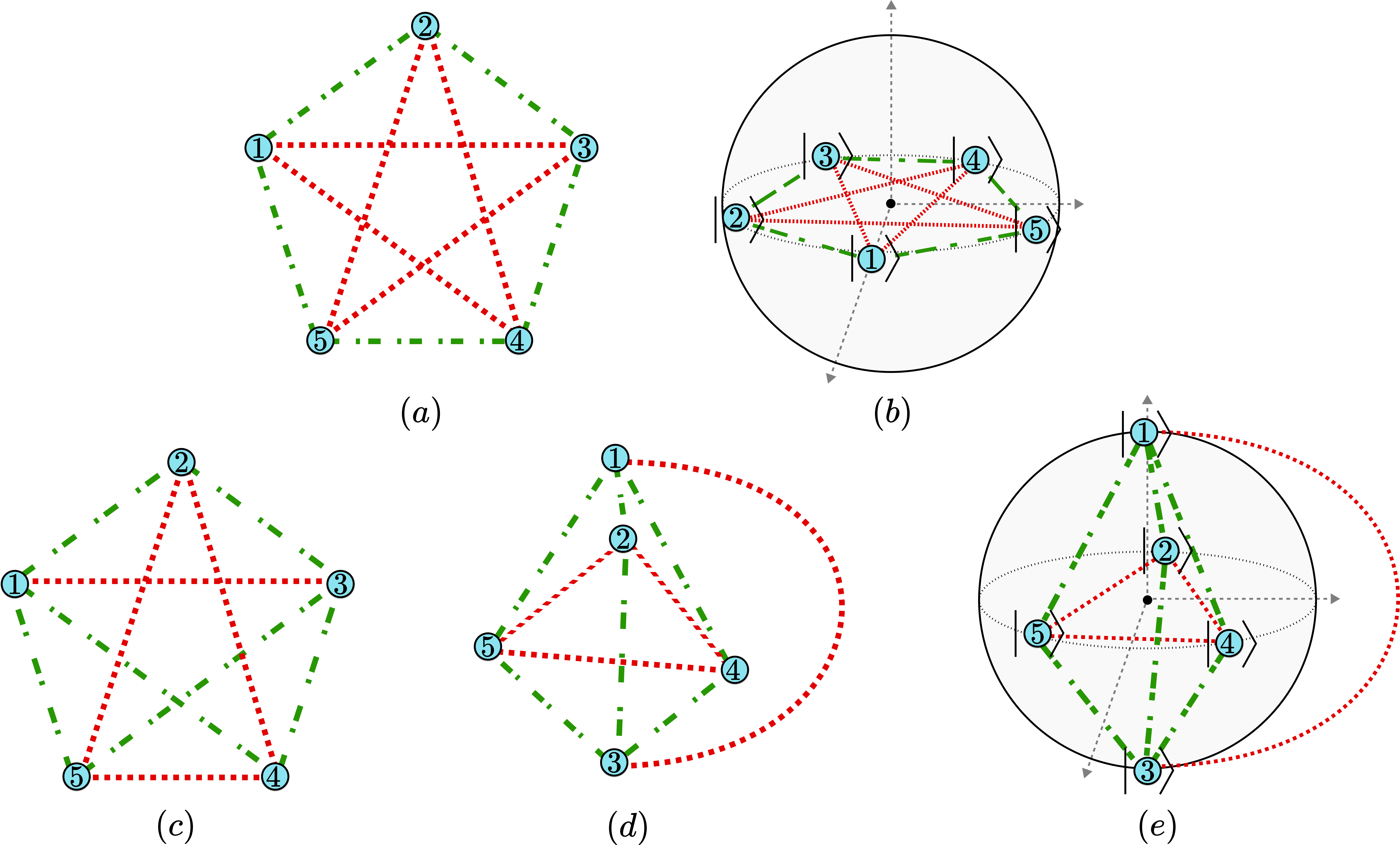}
    \caption{\textbf{Qubit states violating classicality inequalities.}
     (a) depicts the classicality inequality $r_{12}+r_{23}+r_{34}+r_{45}+r_{15}-r_{13}-r_{14}-r_{24}-r_{25}-r_{35}\leq 2$ with edges corresponding to positive terms in green (dash-dotted lines) and to negative terms in red (dashed-only lines). (b) shows a set of five pure states equally spaced over a great circle of the Bloch sphere, which violates this inequality attaining a value of $\sfrac{5\sqrt{5}}{4} > 2$. (c) depicts the classicality inequality $r_{12}+r_{14}+r_{15}+r_{23}+r_{34}+r_{35}-r_{13}-r_{24}-r_{25}-r_{45} \leq 2$ as in (a).
     (d) depicts the same inequality with the graph displayed in a different geometric configuration,  mirroring that of a set of states in the Bloch sphere that largely violates it.
     (e) represents that set of five pure states in the Bloch sphere: three states equally spaced around the equator plus the two eigenstates of the Pauli matrix $\sigma_z$; this set of states attains a value of $\sfrac{9}{4} > 2$ for the inequality.}
    \label{fig: qubit_violations}
\end{figure*}

\psection{Quantum violations}%
We looked for quantum violations of each inequality class of $C_{K_5}$
obtained by pure states in Hilbert spaces of dimensions 2, 3, and 4.
The violations found are included in \cref{tab: inequalities K5}.
The inequality in the third row is apparently not violated by either qubit or qutrit states.
The largest violation found among all the inequalities was $0.855$, for the inequality in the last row of the table.
The sets of quantum states yielding the violations found are presented in Ref.~\cite{wagner2022github}.

For some classes of inequalities,
we also found violations using pure qubit states that display interesting symmetries in the Bloch sphere.
We present those violations in \cref{fig: qubit_violations}.
For instance, consider the inequality in the fifth row of \cref{tab: inequalities K5}.
It can be violated with the quantum states
\begin{align}
    \vert \psi_k \rangle = \frac{1}{\sqrt{2}}\left(\vert 0 \rangle + e^{2\pi i k/5}\vert 1 \rangle\right)
\end{align}
with $k=0,\dots,4$.
This quantum realization attains a value of $\sfrac{5\sqrt{5}}{4}$ and hence a violation of $\sfrac{5\sqrt{5}}{4}-2\approx 0.79508$.
Another interesting violation with qubits is for the inequality in the fourth row of the table.
There, a maximal qubit violation is achieved by the states depicted in \cref{fig: qubit_violations}: choosing $\vert \psi_2 \rangle, \vert \psi_4 \rangle, \vert \psi_5 \rangle $ equally distributed on the equator of the Bloch sphere, \ie separated by angles of $\sfrac{2\pi}{3}$, implying that $r_{24}=r_{25}=r_{45}=\sfrac{1}{4}$, and choosing $\vert \psi_1 \rangle = \vert 0 \rangle ,\vert \psi_3 \rangle = \vert 1 \rangle$, implying that $r_{13}=0$ and all remaining overlaps are equal to $\sfrac{1}{2}$.
This set of vectors attains the value $\sfrac{6}{2}-\sfrac{3}{4}=\sfrac{9}{4}$ and hence a violation of $\sfrac{9}{4}-2=\sfrac{1}{4}$.
These symmetrically arranged qubit states are also the states used in the construction of the elegant joint measurement of Ref.~\cite{gisin2019entanglement}.
However, we could find a higher violation of the same inequality using qutrits, as shown in the table.

We will see in \cref{app:family} that the inequality in \cref{eq:k4nontrivial} generalizes to an infinite family of inequalities for the polytope of $K_n$.
The quantum violation found for this non-cycle $K_4$ inequality used the following four qutrit states:
\begin{align*}
    \ket{\psi_1} &= \ket{0} \\
    \ket{\psi_2} &= \sqrt{\frac{5}{9}}\ket{0} + \sqrt{\frac{4}{9}}\ket{1} \\
    \ket{\psi_3} &= \sqrt{\frac{5}{9}}\ket{0} - \sqrt{\frac{1}{9}}\ket{1} + i\sqrt{\frac{1}{3}}\ket{2} \\
    \ket{\psi_4} &= \sqrt{\frac{5}{9}}\ket{0} - \sqrt{\frac{1}{9}}\ket{1} - i\sqrt{\frac{1}{3}}\ket{2}\\
\end{align*}
This set of states attains a value of $\sfrac{4}{3}$ and hence violation of $\sfrac{4}{3}-1=\sfrac{1}{3}$.
This corresponds to the second class of inequalities of $C_{K_5}$ in \cref{tab: inequalities K5}.

We remark once more that the above violations are \stress{not} necessarily optimal.
They were not found using \eg techniques of semidefinite programming over the quantum set.
We found this landscape of violations by sampling quantum states and calculating the value of the left-hand side of the inequality, which is suboptimal.
An important  remark is that the quantum violation for the $3$-cycle inequality class (first row in \cref{tab: inequalities K5})
is \stress{provably maximal}, as shown in Ref.~\cite{GalvaoB20}.

\section{Infinite family of \\ classical polytope facets}\label{app:family}

\Cref{eq:k4nontrivial} in the main text shows a facet-defining inequality of the polytope $C_{K_4}$
that is not of the previously known form of inequalities derived from cycles in Ref.~\cite{GalvaoB20}
(which where enough, incidentally, to characterise the classical polytope of the graph $K_3 = C_3$).
In this section, we generalize it to an infinite family of new classicality inequalities.
More concretely, we present a construction of a facet-defining inequality of the classical polytope $C_{K_n}$ for any $n \geq 2$.
Moreover, each inequality on this family cannot be obtained from combining prior members of the family.
For $n = 4$, this recovers the just-mentioned inequality from \cref{eq:k4nontrivial}, while for $n=3$ it naturally reduces to the 3-cycle inequality.

Fix a natural number $n \geq 2$.
Write $V_n = \enset{1, \ldots, n}$ for the vertices of $K_n$, and let
$E_n$ denote the set of edges of $K_n$, \ie all two-element subsets of $V_n$.
Consider a partition of $E_n$ into the subsets $G_n, R_n \subseteq E_n$ given as  
\begin{align*}
G_n &\defeq \setdef{\enset{1,i}}{i=2,\ldots,n}
\\
R_n &\defeq E_{n} \setminus G_n.
\end{align*}
The edges in $R_n$ determine a complete subgraph of $K_n$ with one fewer vertex, \ie a subgraph isomorphic to $K_{n-1}$.
In turn, the edges in $G_n$ form a subgraph isomorphic to $K_{1,n-1}$, a star graph with $n$ vertices.  
We use this specific partition of $E_n$ to define a generalized version of the inequality from \cref{eq:k4nontrivial}:  
\begin{equation}\label{eq: new K4 generalized}
    h_n(r) \defeq \sum_{e\in G_n}r_e - \sum_{e\in R_n} r_e \leq 1.
\end{equation}

\begin{figure}[thb]
    \centering
    \includegraphics[width=0.3 \textwidth]{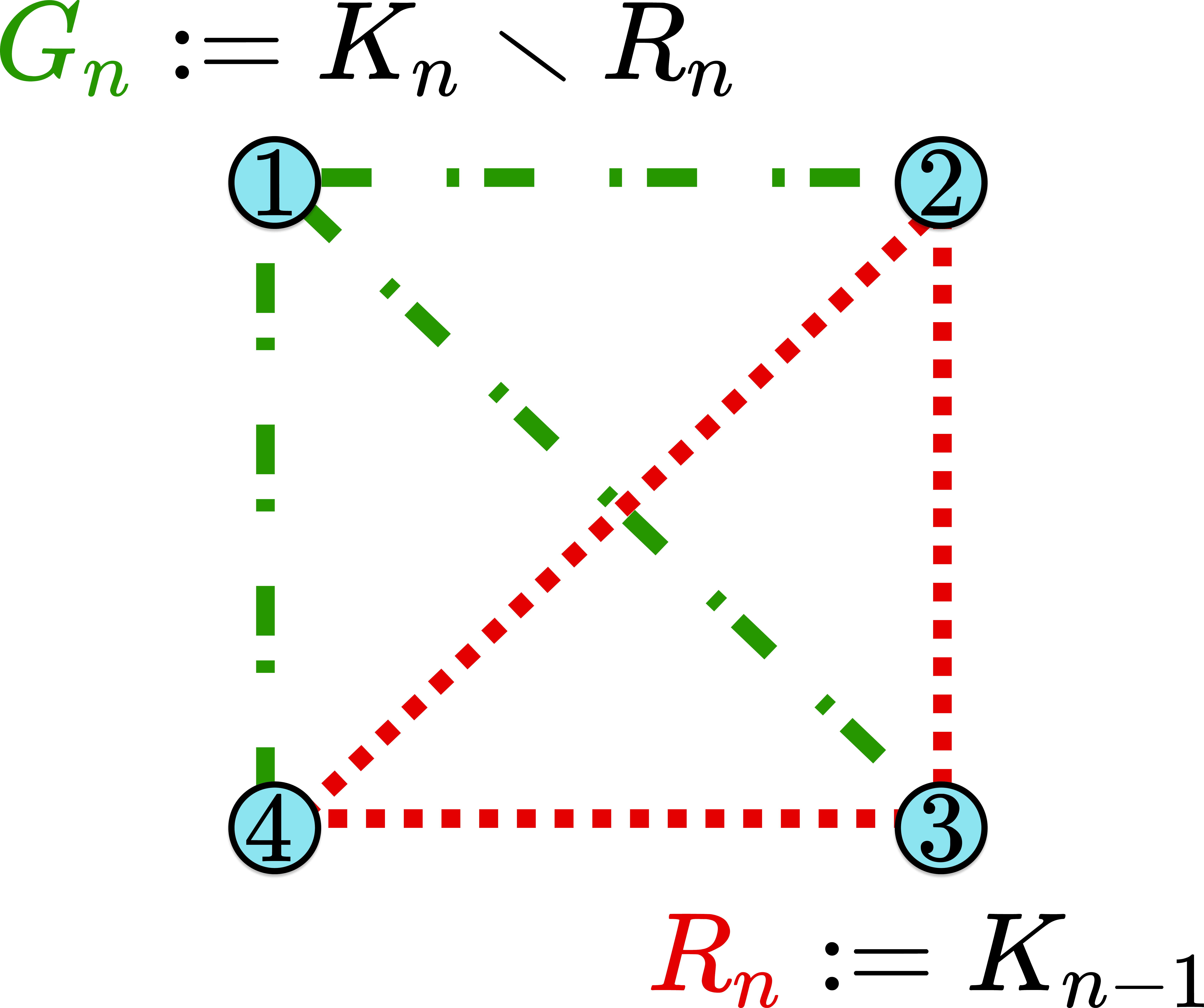}
    \caption{\textbf{Depiction of the sets $R_n$ and $G_n$ for a given complete graph $K_n$.} The set $R_n$ is always a complete subgraph (isomorphic to)  $K_{n-1}$ of $K_n$. Here we considered $n=4$ as an example.}
    \label{fig: R_and_G_figure_cut_Kn}
\end{figure}

We first show that this is indeed a classicality inequality for the complete graph $K_n$.
\begin{proposition}
    For any $n \geq 2$, the classical polytope $C_{K_n}$ of the complete event graph $K_n$ is contained in the half-space defined by the inequality $h_n$ from \cref{eq: new K4 generalized}, \ie all classical edge weightings of $K_n$ satisfy the inequality.
\end{proposition}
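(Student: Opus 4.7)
The plan is to exploit linearity of $h_n$ together with the vertex characterisation of $C_{K_n}$. Since $h_n$ is a linear functional on $[0,1]^{E_n}$ and $C_{K_n}$ is a polytope, it suffices to verify the inequality at the vertices of $C_{K_n}$. By the characterisation recalled in \Cref{app:algo}, every such vertex arises as an equality labelling $\epsilon_\lambda$ of some vertex labelling $\fdec{\lambda}{V_n}{\Lambda}$, so the task reduces to bounding $h_n(\epsilon_\lambda)$ for an arbitrary $\lambda$.

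Given such a $\lambda$, I would organise the computation by the $\lambda$-equivalence classes on $\enset{2,\ldots,n}$. Let $S \defeq \setdef{i \in \enset{2,\ldots,n}}{\lambda(i)=\lambda(1)}$ be the set of vertices other than $1$ that share vertex $1$'s label, and let $C_1,\ldots,C_k$ be the remaining equivalence classes (those whose label differs from $\lambda(1)$). Every edge of $G_n$ joins vertex $1$ to some element of $\enset{2,\ldots,n}$, so the positive contribution is $\sum_{e \in G_n} \epsilon_\lambda(e) = |S|$. Every edge of $R_n$ lies inside $\enset{2,\ldots,n}$, and the monochromatic ones are precisely those joining two vertices of a common class, giving $\sum_{e \in R_n} \epsilon_\lambda(e) = \binom{|S|}{2} + \sum_{j=1}^{k}\binom{|C_j|}{2}$.

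Combining, $h_n(\epsilon_\lambda) = |S| - \binom{|S|}{2} - \sum_j \binom{|C_j|}{2} \leq |S| - \binom{|S|}{2}$, having discarded non-negative terms. The remaining step is an elementary observation: the quadratic $|S| - \binom{|S|}{2} = |S|(3-|S|)/2$ takes the values $0,1,1,0,-2,\ldots$ at $|S|=0,1,2,3,4,\ldots$, so is bounded above by $1$, with equality iff $|S| \in \enset{1,2}$. No genuine obstacle is expected here; the only content of the argument is that partitioning by equivalence classes cleanly separates the star-edge contribution $G_n$ from the intra-class contributions of $R_n$, after which a short combinatorial estimate finishes the job. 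The harder companion claim -- that $h_n$ is facet-defining rather than merely valid -- will require exhibiting $\binom{n}{2}$ affinely independent classical labellings saturating $h_n$, which is a separate exercise.
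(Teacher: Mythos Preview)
Your proof is correct and follows essentially the same approach as the paper: reduce to vertices via linearity, set $S=\setdef{i\in\enset{2,\ldots,n}}{\lambda(i)=\lambda(1)}$, observe the positive contribution equals $|S|$ while the negative contribution is at least $\binom{|S|}{2}$, and conclude from $|S|-\binom{|S|}{2}\leq 1$. The only cosmetic difference is that you compute the exact value of $\sum_{e\in R_n}\epsilon_\lambda(e)$ by partitioning into all equivalence classes before discarding the non-$S$ terms, whereas the paper bounds that sum directly by the contribution from $S$ alone.
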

\begin{proof}
It suffices to check that the inequality is satisfied by any vertex of the polytope $C_{K_n}$.
Recall that the vertices of this polytope correspond to classical edge $\enset{0,1}$-labellings of the graph $K_n$,
that is, those realisable as the equality labelling of some vertex labelling.

So, let $\fdec{\lambda}{V_n}{\Lambda}$ be any vertex labelling and $r \in [0,1]^{E_n}$ be the vertex of the classical polytope corresponding to its equality edge labelling.
That is, for all $e=\enset{i,j} \in E_n$,
\[r_{ij} = \epsilon_\lambda(\enset{i,j})=\begin{cases}1 & \text{ if $\lambda(i)=\lambda(j)$} \\ 0 & \text{ if $\lambda(i)\neq\lambda(j)$}\end{cases}.\]
Consider the set of vertices in $\enset{2, \ldots, n}$ that are labelled the same as vertex $1$,
\begin{align*}
S_\lambda &= \setdef{i \in \enset{2, \ldots, n}}{\lambda(i)=\lambda(1)} 
\\&= \setdef{i \in \enset{2, \ldots, n}}{r_{1 i} = 1}
\end{align*}
By construction, an edge in $G_n$, which is of the form $\enset{1,i}$, is labelled $1$ or $0$ depending on whether $i$ is in $S_\lambda$ or not.
Moreover, by transitivity of equality, if $i,j \in S_\lambda$ then $\lambda(i)=\lambda(j)$, meaning that
the edge $\enset{i,j}$ is also labelled $1$.
Writing $k \defeq |S_\lambda|$, one can therefore bound the left-hand side of \cref{eq: new K4 generalized}:
\begin{align*}
\sum_{e\in G_n}r_e - \sum_{e\in R_n} r_e 
&=
\sum_{i \in S_\lambda}r_{1i}
- \sum_{e\in R_n} r_e 
\\
&= k - \sum_{e\in R_n} r_e 
\\
&\leq k - \sum_{i,j \in S_\lambda} r_{ij}
\\
&= k - \binom{k}{2}
\\&=
1-\binom{k-1}{2} 
\\&\leq 1
\end{align*}
where for the corner case $k=0$ this still holds putting $\binom{-1}{2}=1$
\end{proof}

We now state the central result of this section.

\begin{theorem}
The inequality $h_n$ from  \cref{eq: new K4 generalized} defines a facet of the classical polytope $C_{K_n}$ of the complete event graph $K_n$ for any $n \geq 2$.
\end{theorem}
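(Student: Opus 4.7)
The plan is to verify that the face $F \defeq C_{K_n} \cap \{r : h_n(r) = 1\}$ has affine dimension $\binom{n}{2} - 1$; since $C_{K_n}$ is full-dimensional in $\mathbb{R}^{E_n}$, this makes $F$ a facet. Full-dimensionality is easy to establish: the vertex $r = 0$ (from the vertex labelling with all labels distinct) together with the $\binom{n}{2}$ vertices supported on a single edge (each coming from a labelling with exactly one pair of coinciding labels) form an affinely independent set of $\binom{n}{2} + 1$ points, so $\dim C_{K_n} = \binom{n}{2}$.

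Next I would classify the vertices of $C_{K_n}$ lying on the supporting hyperplane $\{h_n = 1\}$. Revisiting the chain of inequalities in the preceding proposition shows that equality forces $k \defeq |S_\lambda| \in \{1, 2\}$ together with saturation of $\sum_{e \in R_n} r_e = \binom{k}{2}$. A short case analysis yields exactly two families. The Type~A vertices ($k=1$) arise from labellings with $\lambda(1) = \lambda(v)$ for a unique $v \in \{2, \ldots, n\}$ and the other $n-2$ labels pairwise distinct; the resulting edge labelling $r^A_v$ is supported only on $\{1,v\}$. The Type~B vertices ($k=2$) arise from labellings with $\lambda(1) = \lambda(u) = \lambda(v)$ for a unique pair $\{u,v\} \subseteq \{2,\ldots,n\}$ and the other $n-3$ labels pairwise distinct; the resulting $r^B_{uv}$ is supported on the triangle $\{\{1,u\}, \{1,v\}, \{u,v\}\}$. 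Counting gives $(n-1) + \binom{n-1}{2} = \binom{n}{2}$ equality-achieving vertices of $F$.

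The crux of the argument is then to show that these $\binom{n}{2}$ vertices affinely span the whole hyperplane $\{h_n = 1\}$. For this I would take an arbitrary linear functional $\ell(r) = \sum_e c_e r_e$ assuming a common value $c$ on all Type~A and Type~B vertices, and show $\ell$ is proportional to $h_n$. Indeed, $\ell(r^A_v) = c_{1v} = c$ for every $v \in \{2,\ldots,n\}$, so setting $\alpha \defeq c$ all edges of $G_n$ receive the coefficient $\alpha$. Then $\ell(r^B_{uv}) = 2\alpha + c_{uv}$ must also equal $\alpha$, giving $c_{uv} = -\alpha$ for every $\{u,v\} \in R_n$. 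Hence $\ell = \alpha \cdot h_n$, so up to scaling $\{h_n = 1\}$ is the unique affine hyperplane containing $F$.

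To conclude, if the affine hull of $F$ had dimension strictly less than $\binom{n}{2} - 1$, it would be the intersection of at least two linearly independent hyperplanes containing $F$, contradicting the uniqueness just established. Thus $F$ has affine dimension $\binom{n}{2} - 1$ and is a facet of $C_{K_n}$. The main delicacy I foresee lies in the second step, the case analysis classifying equality-achieving vertices: one must carefully use transitivity of equality inside $S_\lambda$ to pin down the saturation condition and rule out every labelling outside the two families above.
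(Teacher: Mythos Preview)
Your proof is correct and arrives at the same set of $\binom{n}{2}$ saturating vertices as the paper: your Type~A and Type~B labellings are precisely the paper's $r^{(i)}$ and $r^{(i,j)}$. The main technical difference lies in how you establish that these vertices span the hyperplane $\{h_n = 1\}$. The paper verifies affine independence directly, arranging the vectors into a triangular matrix with unit diagonal; you argue dually, showing that any linear functional constant on the set must be a scalar multiple of $h_n$. Your dual argument is arguably cleaner and avoids fiddling with coordinate orderings. You also do two things the paper leaves implicit or omits: you explicitly check full-dimensionality of $C_{K_n}$ (which the paper uses without proof), and you classify \emph{all} equality-achieving vertices of the polytope rather than merely exhibiting enough of them---extra information, though not strictly needed for the facet claim. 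The ``delicacy'' you flag in the classification step is in fact routine once one tracks the two places where equality must hold in the chain from the preceding proposition; and in any case your spanning argument in step~3 only uses that the Type~A and Type~B vertices lie on the hyperplane, not that they exhaust the equality cases.
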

\begin{proof}
We establish this result by finding the set of vertices of the polytope $C_{K_n}$ that belongs to -- and therefore determines -- this facet.
In fact, it suffices to find a set of points $F$ in the space (of edge weightings) such that:
(i) all the points in $F$ belong to the polytope $C_{K_n}$,
(ii) all the points in $F$ saturate the inequality, \ie belong to the hyperplane determined by it,
(iii)
the set $F$ is affinely independent,
and (iv) $F$ contains as many points as the dimension $D$ of the polytope, so that it generates an affine subspace of dimension $D-1$.
In our proof, the chosen points are moreover vertices of the polytope, as they are edge $\enset{0,1}$-labellings.

We construct a set $F$ of polytope vertices.
This consists of two kinds of edge labellings:
those that assign $1$ to exactly one edge of $G_n$ (and $0$ to all other edges of $E_n$) and those that assign $1$ precisely to a triangle consisting of two edges from $G_n$ and another from $R_n$.
More formally, we define a family of edge $\enset{0,1}$-labellings indexed by subsets of size $1$ or $2$
of the vertex set $\enset{2,\ldots,n}$, as follows:
for each $i = 2, \ldots, n$,
define the edge $\enset{0,1}$-labelling $r^{(i)}$
with
$r^{(i)}_{1i} = 1$
and $r^{(i)}_e=0$ for all other edges $e$;
for each pair
$i,j = 2, \ldots, n$ with $i \neq j$
define the edge $\enset{0,1}$-labelling $r^{(i,j)}$
with $r^{(i,j)}_{1i} = r^{(i,j)}_{1j} = r^{(i,j)}_{ij} = 1$
and $r^{(i,j)}_e=0$ for all other edges $e$.
The set $F$ is then given by
\[
F \defeq \setdef{r^{(i)}}{i = 2, \ldots, n} 
\cup \setdef{r^{(i,j)}}{i,j = 2, \ldots ., n, i \neq j}.
\]
\Cref{fig:principle_argument_newk4_infinite} depicts the construction of the set $F$ for the case of $n=5$.
We now check conditions (i)--(iv) to establish the desired result.

\begin{figure}[thb]
    \centering
    \includegraphics[width=\columnwidth]{}
    \caption{\textbf{The construction of the set $F$ for the $K_5$ graph.} Each edge is labelled $1$ where explicitly noted, otherwise it is labelled $0$ (to keep the figures easy to read). The first row shows the four labellings of the form $r^{(i)}$ with only one edge labelled $1$ from $G_5$.
    The remaining rows show the labellings of the form $r^{(i,j)}$, which assign label $1$ to exactly one triangle consisting of two edges from $G_5$ and the connecting edge from $R_5$.}
    \label{fig:principle_argument_newk4_infinite}
\end{figure}

For condition (i), we use \cref{prop:charclassicalvertices} to show that all the edge labellings in the set $S$ are classical and thus vertices of the polytope $C_{K_n}$.
Indeed, no cycle can have exactly one edge with label $0$. In the case of the labellings of the form $r^{(i)}$, this is immediate as there is only one edge not labelled $0$.
For the labellings of the form $r^{(i,j)}$, no triangle (\ie subgraph isomorphic to $C_3$) has exactly one edge labelled $0$: if one chooses two edges labelled $1$ then the remaining edge that completes the 3-cycle also has label $1$. Moreover, any larger cycle can have at most two edges labelled $1$.
Alternatively, we can show this by constructing an underlying vertex labelling:
for $r^{(i)}$ pick $\fdec{\lambda}{V_n}{\Lambda}$
with $\lambda(1)=\lambda(i)$ and all other vertex labelled differently;
for $r^{(i,j)}$ pick $\lambda$ with $\lambda(1)=\lambda(i)=\lambda(j)$
and the other vertices labelled differently.

Condition (ii) is directly checked: for each $i = 2, \ldots, n$ we have
\[
\sum_{e\in G_n}r^{(i)}_e - \sum_{e\in R_n} r^{(i)}_e = r^{(i)}_{1i} - 0 = 1 - 0 = 1 ,
\]
and for each pair $i, j = 2, \ldots, n$ with $i \neq j$,
\[
\sum_{e\in G_n}r^{(i,j)}_e - \sum_{e\in R_n} r^{(i,j)}_e = r^{(i,j)}_{1i} + r^{(i,j)}_{1j} - r^{(i,j)}_{ij} = 2 - 1 = 1. 
\]

For condition (iii), 
affine independence can be verified by inspecting the matrix whose columns are the vectors corresponding to the edge-labellings in $F$. Ordering the components of each vector (corresponding to the edges of $K_n$) in lexicographic order and listing $r^{(i)}$ followed by $r^{(i,j)}$ also in that order, the matrix is arranged to be triangular with diagonal entries all equal to 1, hence its determinant is equal to $1$, implying linear independence of the vectors.

Finally, for condition (iv), 
as all these labellings are distinct, one can count the number of elements of $S$ from the way they were constructed:
\[|F| = \binom{n-1}{1} + \binom{n-1}{2} = \binom{n}{2} = \frac{n(n-1)}{2}.\]
We conclude that it is the same as the dimension of the ambient space (of edge labellings) where the polytope lives, and thus also of the polytope itself.
\end{proof}

\section{Event graphs and \\ Kochen--Specker contextuality}\label{app:contextuality}

In this section, we establish a formal connection between our framework and (Kochen--Specker) contextuality.
The central result (\cref{theorem: STAB = C}) shows how our event graph formalism recovers all noncontextuality inequalities obtainable from the Cabello--Severini--Winter (CSW) exclusivity graph approach~\cite{cabello2014graph}.

To achieve this, we encode a contextuality setup, represented in CSW by an exclusivity graph $H$,
by imposing exclusivity constraints on a related event graph $H_\star$.
This process amounts to taking a cross-section yielding a subpolytope of the classical polytope $C_{H_\star}$.
We show that the resulting facet inequalities bound noncontextual models for $H$.

In fact, we prove something \stress{stronger}.
We describe an explicit isomorphism between
the noncontextual polytope associated by CSW to the exclusivity graph $H$
and
this cross-section subpolytope of the classical polytope $C_{H_\star}$ associated by our approach to the event graph $H_\star$.
In particular, these polytopes have the same non-trivial facet-defining inequalities.
These are obtainable from the inequalities that define the full (unconstrained) classical polytope of the event graph $H_\star$ by setting some coefficients to zero.
\Cref{theorem: STAB = C} thus establishes a tight correspondence between our event graph approach and a broad, well-established framework for contextuality.

In what follows, we introduce the relevant definitions regarding the exclusivity graph approach, the associated event graphs, and the constraints to be imposed on them, before proving the new results. 

\psection{The exclusivity graph approach}%
In the CSW framework from Ref. \cite{cabello2014graph}, contextuality scenarios are described by so-called exclusivity graphs. Hence this formalism is also known as the exclusivity graph approach; see also~\cite[Chapter 3]{amaral2018graph} for a recent and comprehensive discussion.

The vertices of an exclusivity graph $H$ represent measurement events, and its edges indicate
exclusivity between events, where two events are exclusive that can be distinguished by a measurement procedure.

Even though the CSW framework is theory-independent, it is helpful for clarity of exposition to consider its instantiation in quantum theory, in order to better convey the underlying intuitions.
In quantum theory, measurement events are represented by projectors (PVM elements) on a Hilbert space, or equivalently, by closed subspaces of the Hilbert space.
Exclusivity is captured by orthogonality, which characterizes when two projectors may appear as elements of the same PVM, \ie events from \stress{the same} measurement procedure.
Given a set of projectors $\{\Pi_v\}_{v \in V}$ on a fixed Hilbert space,
the corresponding contextuality scenario is thus described by 
its orthogonality graph.
This graph has set of vertices $V$ and has an edge $\{u,v\}$ if and only if the projectors $\Pi_u$ and $\Pi_v$ are orthogonal to each other, \ie when $\Pi_u\Pi_v=0$.

In this approach, a non-negative vertex weighting $\fdec{\gamma}{V(H)}{\RR_{\geq 0}}$ on the exclusivity graph $H$
determines a noncontextuality inequality
on the probabilities $P(v)$ of measurement events $v \in V(H)$:
\[\sum_{v \in V(H)} \gamma(v) P(v)   \;\leq\; \alpha(H,\gamma),\]
where $\alpha(H,\gamma)$ is the independence number of the vertex-weighted graph.
In the quantum case, this yields a noncontextuality condition on the
statistics predicted by a given quantum state $\psi$:
\[\sum_{v \in V(H)} \gamma(v) \langle \psi \vert \Pi_v \vert \psi \rangle   \;\leq\; \alpha(H,\gamma) .\]

Such noncontextuality inequalities above determine the polytope of noncontextual behaviours for  any exclusivity graph $H$.
This polytope, known as the stable set polytope of $H$, $\STAB(H)$,
is most readily defined by its V-representation, which we now present, following \cite[Chapter 3]{amaral2018graph}.

\begin{definition}\label{def: stable set}
Let $H$ be a graph. A subset $S \subseteq V(H)$ of vertices is called a \stress{stable set} if no two vertices of $S$ are adjacent in $H$,
\ie for all $v, w \in S$, $\{v,w\} \not\in E(H)$.
Write $\mathcal{S}(H)$ for the set of stable sets of $H$.
\end{definition}

To any subset of vertices $W\subseteq V(H)$ corresponds
its characteristic map, the vertex $\{0,1\}$-labelling $\fdec{\chiW}{V(H)}{\enset{0,1}}$ given by: 
\[
    \chiW(v) \defeq \begin{cases}
    1 & \text{ if $v \in W$,}\\
    0 & \text{ if $v \notin W$.}
    \end{cases}
\]
Through the inclusion $\enset{0,1} \subseteq [0,1]$, one regards a vertex $\{0,1\}$-labelling (equivalently, a subset of vertices) as a point in $[0,1]^{V(H)} \subseteq \RR^{V(H)}$.
Those arising from stable sets $S \in \mathcal{S}(H)$ correspond to the deterministic noncontextual models, which determine the whole convex set of noncontextual behaviours.

\begin{definition}
\label{def: STAB}
The \stress{stable set polytope} of a graph $H$, denoted $\STAB(H)$, 
is the convex hull of the points $\chiS \in [0,1]^{V(H)}$ with $S$ ranging over all stable sets of $H$,
\[
\STAB(H) \defeq \textrm{ConvHull}\setdef{\chiS}{S \in \mathcal{S}(H)}.
\]
\end{definition}

To get the intuition underlying this description, one may think of a vertex $\{0,1\}$-labelling $\fdec{\chi_W}{V(H)}{\enset{0,1}}$ as a deterministic assignment of truth values to all measurement events (vertices of the exclusivity graph). In this interpretation, the subset of vertices $W \subseteq V(H)$ is the set of measurement events that are assigned \textit{true}.
The stability condition indicates that no two adjacent vertices of the exclusivity graph $H$ are labelled with $1$, that is, two exclusive measurement events cannot be simultaneously true.
This captures the exclusivity condition at the deterministic level, thus yielding the deterministic noncontextual models.

\psection{From exclusivity graphs to constrained event graphs}%
We relate this approach to our framework by constructing a new (event) graph $H_\star$ from any (exclusivity) graph $H$. This is obtained by adding a new vertex $\psi$ with an edge connecting it to all the vertices of $H$.
See \cref{fig:orthogonality_to_event} for an instance of this construction for the KCBS scenario and \cref{fig:ProofAppD} for a more generic description.
The construction is formally described in \cref{def: G out of G prime} below.

The relevance of the new vertex $\psi$ is well known;
it is usually called the `handle' and it appears in the literature on the graph approaches~\cite{amaral2018graph,baldijao2020classical,vandre2022quantum}.
Its name comes from the geometric arrangement of the vectors providing the maximal quantum violation of the KCBS inequality of \cref{eq: KCBS event graph}:
the quantum state resembles the handle of an umbrella made of the vectors that describe measurement events.

\begin{definition}\label{def: G out of G prime}
Let $H$ be a graph. Define a new graph $H_\star$ by
\begin{align*}
    V(H_\star) &\defeq V(H) \sqcup \enset{\psi}
    \\
    E(H_\star) &\defeq E(H) \cup \setdef{\enset{\psi,v}}{v \in V(H)}.
\end{align*}
Moreover, define $C_{H_\star}^0$ to consist of the classical edge weightings of $H_\star$ that assign value $0$ to all edges in $H$,
\[
C_{H_\star}^0 \defeq \setdef{r \in C_{H_\star}}{\Forall{e\in E(H)} r_e=0} .
\]
\end{definition}

The set $C_{H_\star}^0$ is, by construction, a cross-section of the classical polytope $C_{H_\star}$ of the event graph $H_\star$,
being its intersection with the $|V(H)|$-dimensional subspace defined by the equations $\bigwedge_{e \in E(H)}r_e=0$.
Moreover, it is a subpolytope of $C_{H_\star}$, \ie the convex hull of a subset of its vertices. These vertices are the classical edge $\enset{0,1}$-labellings that assign label $0$ to edges in $H$.
In terms of the underlying vertex labellings (from which classical edge labellings arise as equality labellings), the requirement is that any two vertices adjacent in $H$ must be labelled differently.

\begin{figure}[t]
    \centering
    \includegraphics[width=\columnwidth]{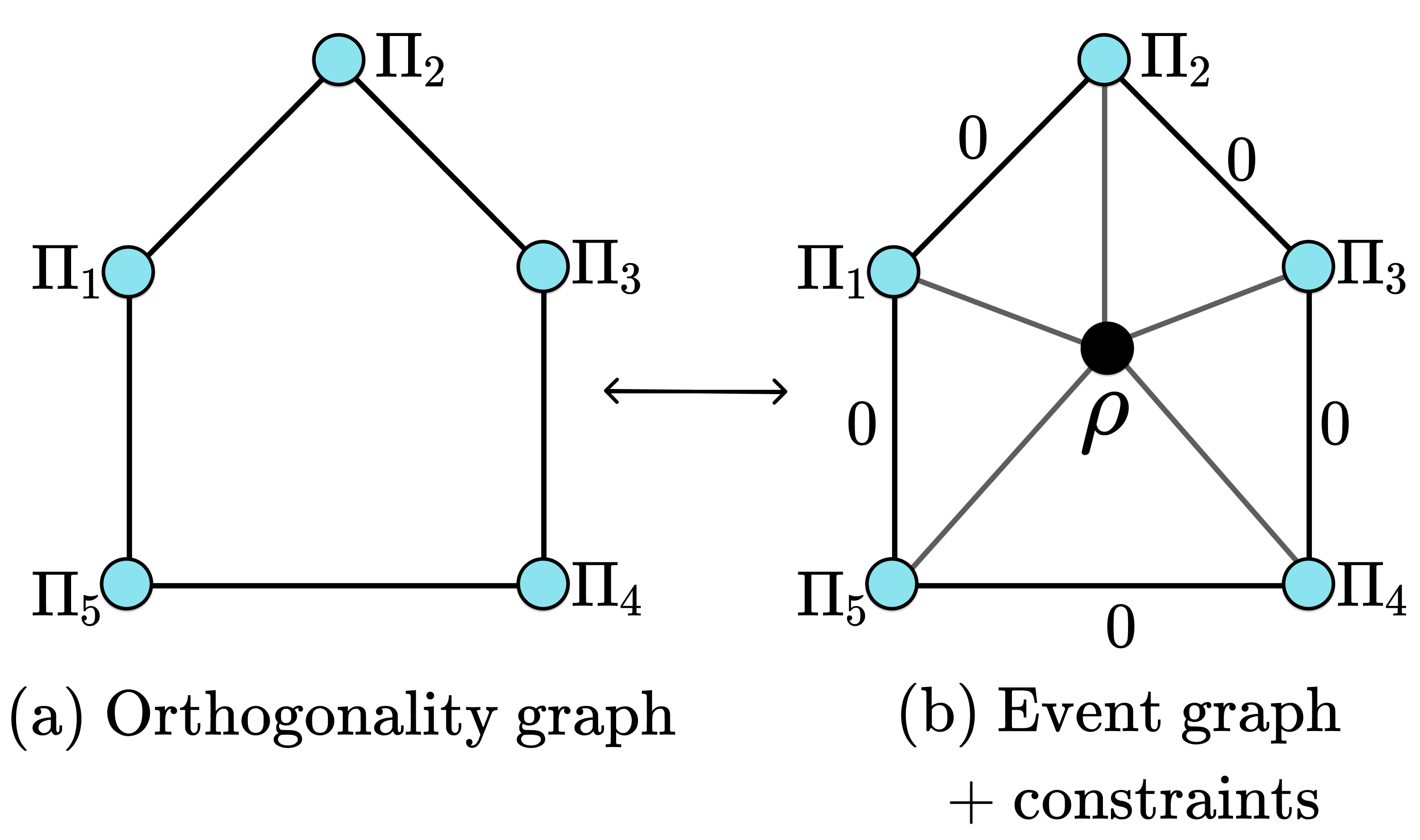}
    \caption{\textbf{Equivalence described by \cref{theorem: STAB = C} linking contextuality à la CSW to event graphs.}
    The behaviours on an exclusivity graph are in bijective correspondence with edge weightings (overlap assignments) in the related event graph subject to constraints.
    In particular, the \stress{noncontextual} behaviours for the exclusivity graph correspond bijectively to the \stress{classical} edge weightings in the event graph with constraints.}
    \label{fig:orthogonality_to_event}
\end{figure}

\psection{Recovering the noncontextual polytope}%
The edge set of the graph $H_\star$ can be partitioned into two sets: the edges already present in $H$ and the new edges of the form $\{\psi,v\}$ for $v \in V(H)$.
The latter are in one-to-one correspondence with vertices of $H$. So, there is a bijection $E(H_\star) \cong E(H) \sqcup V(H)$.

When considering the polytope $C_{H_\star} \subseteq [0,1]^{E(H_\star)}$
we adopt the convention of ordering the coordinates with the edges already in $H$ listed first, so that
\[\RR^{E(H_\star)} \cong \RR^{E(H) \sqcup V(H)} \cong \RR^{E(H)} \times \RR^{V(H)}.\]
The subpolytope $C_{H_\star}^0$ is thus written as the set of points
of $C_{H_\star}$ of the form $(\mathbf{0}_{H},r)$ where 
$\mathbf{0}_{H}$ is the zero vector in $\RR^{E(H)}$ (corresponding to the edges inherited from $H$)
and $r$ is a weighting of the remaining (new) edges.
In particular, the vertices of $C_{H_\star}^0$ are precisely the classical $\{0,1\}$-labellings of $H_\star$
that assign the label $0$ to all the edges in $H$.

We can now prove our main result, showing that $C_{H_\star}$ is indeed (isomorphic to) the polytope of noncontextual behaviours for $H$.

\begin{theorem}\label{theorem: STAB = C}
For any (exclusivity) graph $H$, there is an isomorphism of polytopes
\[C_{H_\star}^0 \; \cong \; \STAB(H)\]
between the stable set polytope (of noncontextual models) of $H$
and the subpolytope of the classical polytope of event graph $H_\star$ constrained by the exclusivity conditions.
More explicitly, this is given by the identification
\[C_{H_\star}^0 \; = \; \{\mathbf{0}_H\} \times \STAB(H)\]
where $\mathbf{0}_H$ is the zero vector in $\mathbb{R}^{E(H)}$.
\end{theorem}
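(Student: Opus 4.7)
The plan is to establish the claimed polytope identity by setting up an affine bijection between the vertex sets of the two polytopes and invoking the fact that both are convex hulls of their vertices. Explicitly, the candidate map sends $r \in \RR^{V(H)}$ to $(\mathbf{0}_H, r) \in \RR^{E(H)} \times \RR^{V(H)} \cong \RR^{E(H_\star)}$, using the coordinate ordering fixed before the theorem statement. This map is clearly affine (indeed, linear), so it suffices to show it restricts to a bijection between the extreme points of $\STAB(H)$ (the characteristic vectors $\chi_S$ of stable sets $S \in \mathcal{S}(H)$) and the extreme points of $C_{H_\star}^0$. For the latter, I first observe that the vertices of the cross-section $C_{H_\star}^0$ are precisely the classical edge $\enset{0,1}$-labellings $\alpha$ of $H_\star$ with $\alpha|_{E(H)}=0$: since every vertex of $C_{H_\star}$ has nonnegative coordinates, writing any $r \in C_{H_\star}^0$ as a convex combination of vertices of $C_{H_\star}$ forces every contributing vertex to itself vanish on $E(H)$.

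For the forward direction, given a stable set $S \in \mathcal{S}(H)$, I exhibit a vertex labelling $\fdec{\lambda}{V(H_\star)}{\Lambda}$ whose equality edge labelling realises $(\mathbf{0}_H, \chi_S)$. Fix a distinguished label $\star \in \Lambda$. Set $\lambda(\psi) \defeq \star$, put $\lambda(v) \defeq \star$ for each $v \in S$, and assign every $v \in V(H)\setminus S$ its own fresh label distinct from $\star$ and from each other. Then $\epsilon_\lambda(\enset{\psi,v}) = 1$ iff $\lambda(v) = \star$ iff $v \in S$, matching $\chi_S(v)$. For an edge $\enset{v,w} \in E(H)$, stability of $S$ rules out having both $v, w \in S$, so at least one of $\lambda(v), \lambda(w)$ is a fresh label different from the other, yielding $\epsilon_\lambda(\enset{v,w}) = 0$, as required.

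For the inverse direction, given a classical edge labelling $\alpha$ of $H_\star$ with $\alpha|_{E(H)} = 0$, pick any witnessing vertex labelling $\lambda$ with $\alpha = \epsilon_\lambda$ and define
\[
S_\alpha \defeq \setdef{v \in V(H)}{\alpha(\enset{\psi, v}) = 1} = \setdef{v \in V(H)}{\lambda(v) = \lambda(\psi)}.
\]
If $v,w \in S_\alpha$ were adjacent in $H$, then $\lambda(v) = \lambda(\psi) = \lambda(w)$ would force $\epsilon_\lambda(\enset{v,w}) = 1$, contradicting $\alpha(\enset{v,w}) = 0$; hence $S_\alpha$ is stable. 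By construction, the $V(H)$-component of $\alpha$ coincides with $\chi_{S_\alpha}$, and the two constructions are mutually inverse. Taking convex combinations then yields the asserted identity $C_{H_\star}^0 = \enset{\mathbf{0}_H} \times \STAB(H)$. No serious obstacle is anticipated: the only step requiring care is the identification of vertices of the cross-section $C_{H_\star}^0$ with $\enset{0,1}$-labellings vanishing on $E(H)$, after which the proof reduces to a direct dictionary between the stability condition in $H$ and the exclusivity-as-zero-weight condition on $H_\star$ mediated by the handle vertex $\psi$.
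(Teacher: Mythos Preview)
Your proposal is correct and follows essentially the same approach as the paper: establishing the polytope identity by exhibiting a bijection between the vertices of $\STAB(H)$ (characteristic vectors of stable sets) and those of $C_{H_\star}^0$ (classical edge $\enset{0,1}$-labellings of $H_\star$ vanishing on $E(H)$). The only stylistic difference is that the paper argues both directions simultaneously via the path characterization of classical edge labellings (\cref{prop:charclassicalvertices}), whereas you construct an explicit realizing vertex labelling for the forward direction and invoke a witnessing labelling for the inverse; you also spell out the nonnegativity argument identifying the vertices of the cross-section, which the paper states without proof in the paragraph preceding the theorem.
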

\begin{proof}
To establish the result, we consider the vertices of these polytopes.
Per the above discussion, we have $E(H_\star) \cong E(H) \sqcup V(H)$.
Consequently, there is a bijection between
vertex $\{0,1\}$-labellings of $H$ (equivalently, subsets of $V(H)$), on the one hand,
and edge $\{0,1\}$-labellings of $H_\star$ that assign label $0$ to all the edges in $E(H)$, on the other.
Explicitly, to each subset of vertices $W \subseteq V(H)$ corresponds the edge-labelling of $H_\star$
\[\fdec{[\mathbf{0}_H,\chi_{W}]}{E(H_\star) \cong E(H) \sqcup V(H)}{\{0,1\}},\]
as depicted in \cref{fig:ProofAppD}.

We show that this bijection restricts to a bijection between the \stress{classical} assignments in both cases.
Concretely, a subset of vertices $S \subseteq V(H)$ is \stress{stable}, hence (its characteristic map $\fdec{\chi_{S}}{V(H)}{\{0,1\}}$ is) a vertex of the polytope $\STAB(H)$,
if and only if the corresponding edge labelling $[\mathbf{0}_H,\chi_{S}]$ of $H_\star$ is classical, hence a vertex of the polytope $C_{H_\star}$ and thus of $C_{H_\star}^0$. 

We establish the two directions of this equivalence simultaneously, recalling the characterisation of classical edge labellings from \cref{prop:charclassicalvertices}.
Consider $H_\star$ with edge labelling $[\mathbf{0}_H,\chi_{S}]$.
The labelling fails to be classical if and only if there is an edge with label $0$ between two vertices linked by a path consisting of edges with label $1$.
Since all the edges between vertices in $H$ have label $0$,
the only way to build such a path of $1$-labelled edges is via the handle $\psi$: \eg $\{u,\psi\},\, \{\psi, v\}$ where both $u$ and $v$ must belong to $S$.
So, two vertices $u$ and $v$ of $H_\star$ are linked by a $1$-labelled path if and only if they both belong to $S \cup \{\psi\}$.
Therefore, the labelling is classical if and only if there is no edge with label $0$ between vertices in this set $S \cup \{\psi\}$.
To further simplify this condition, note that edges between $\psi$ and a vertex from $S$ have label $1$ by construction of the second component of $[\mathbf{0}_H,\chi_{S}]$, while from the first component, all edges between vertices in $H$ have label $0$.
The classicality condition is thus equivalent to there being no edges in $H$ between vertices in $S$, which is precisely to say that $S$ is stable.
\end{proof}

\begin{figure}[t]
    \centering
    \includegraphics[width=0.9 \columnwidth]{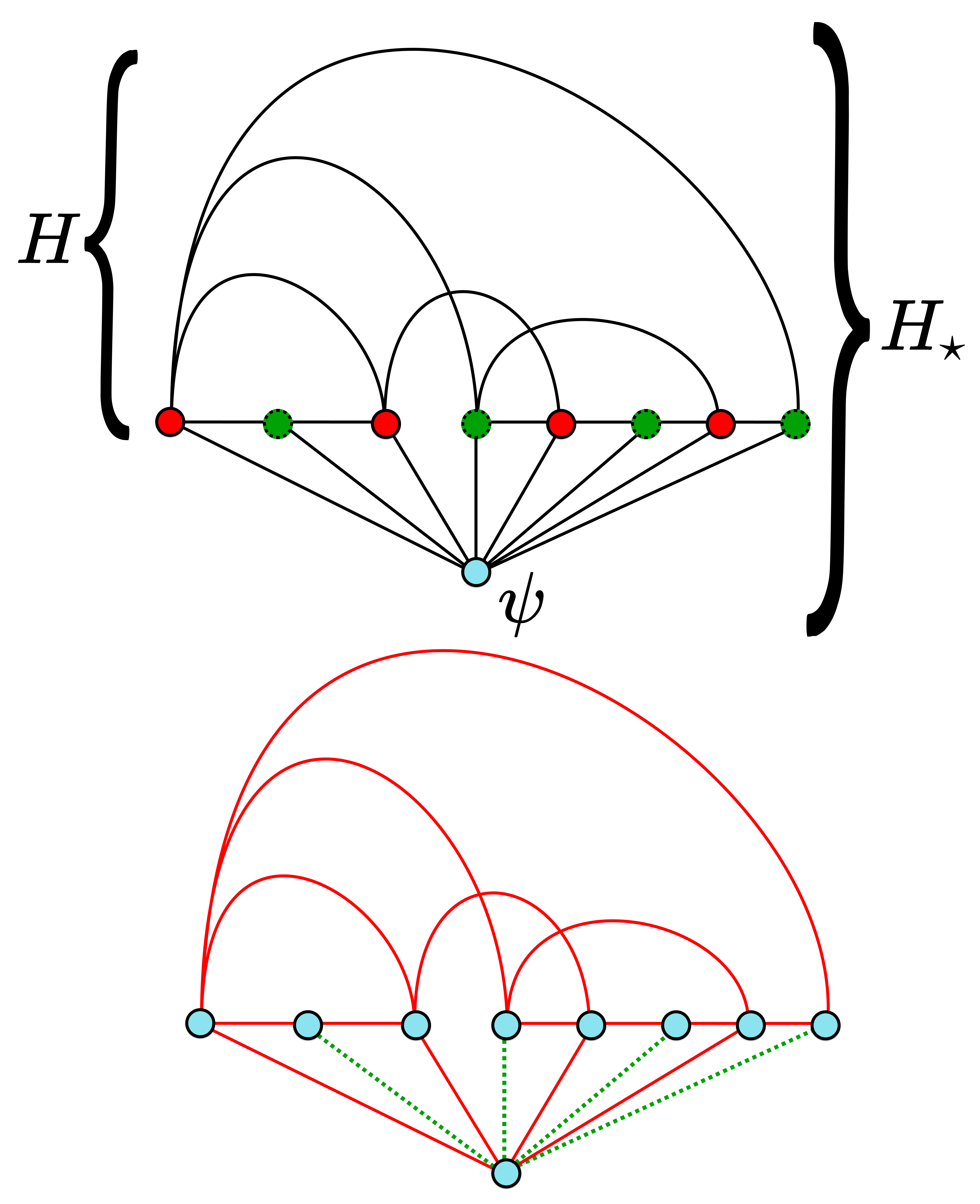}
    \caption{\textbf{Translation between vertex labellings of $H$ (that are characteristic maps of stable sets, hence vertices of $\STAB(H)$)
    and constrained edge labellings of $H_\star$ (that are classical, hence vertices of  $C_{H_\star}^0)$.)}
    The top figure depicts a graph $H$, standing for a generic exclusivity graph, and its extension $H_\star$  by adjoining the handle $\psi$ and new edges $\{\psi,v\}$ for all $v \in V(H)$.
    The vertices of $H$ that are shown in green (dashed) form a stable set $S \in \mathcal{S}(H)$.
    Its characteristic map $\fdec{\chi_S}{V(H)}{\{0,1\}}$ assigns $1$ to the green (dashed) vertices and $0$ to the red (solid) vertices of $H$.
    The bottom figure shows how such a vertex $\{0,1\}$-labelling is translated to an edge $\{0,1\}$-labelling of $H_\star$ assigning $0$ to all the edges of $H$ (and vice-versa) as described in the proof of \cref{theorem: STAB = C}. Green (dashed) edges are labelled $1$ and red (solid) edges are labelled $0$,
    in accordance with the vertex labellings from $\chi_S$, complemented by the labels induced by exclusivity constraints, $\mathbf{0}_{H}$, as described in the text.
    $S$ being stable is equivalent to the resulting edge labelling of $H_\star$ being classical.}
    \label{fig:ProofAppD}
\end{figure}

\psection{Recovering all noncontextuality inequalities}
We established \cref{theorem: STAB = C} in terms of the vertices of the polytopes, \ie by working with their V-representations.
We now consider the relationship between their H-representations, \ie their facet-defining inequalities.\footnote{H-representation is standard terminology referring to the description of a polytope as an intersection of half-spaces, \ie in terms of (facet-defining) inequalities. The `H' in `H-representation' is not to be confused with the symbol `$H$' that we use to denote an exclusivity graph.}

Of course, there is also a bijection between the facets of $\STAB(H)$ and those of $C_{H_\star}^0$.
Given the particularly simple description of the isomorphism, whereby $C_{H_\star}^0$ is written as a product of polytopes,
we can write this correspondence explicitly.
It turns out that the facet-defining inequalities of the subpolytope $C_{H_\star}^0$
are precisely the same as the facet-defining inequalities of the stable polytope of $H$.
Moreover, these can be obtained from the inequalities defining the (unconstrained) polytope $C_{H_\star}$ of the event graph $H_\star$ by setting some coefficients to zero.
We thus recover the full set of noncontextuality inequalities from our event graph formalism.

To see this, recall that if $P$ and $Q$ are two convex polytopes with H-representations
$P = \setdef{x}{A_1 \, x \preccurlyeq b_1}$ 
and
$Q = \setdef{y}{A_2 \, y \preccurlyeq b_2}$
then their product has H-representation
\[
P \times Q = \setdef{(x,y)}{A_1 \, x \preccurlyeq b_1 \text{ and } A_2 \, y \preccurlyeq b_2}.
\]
Here, the notation $A \, z \preccurlyeq b$ describes a set of linear inequalities on $z$ in matrix form, with the symbol $\preccurlyeq$ standing for component-wise inequality $\leq$ between real numbers.

Applying this to
\begin{align*}
C_{H_\star}^0 &= \{\mathbf{0}_{H}\} \times \STAB(H)\\
&=
\setdef{(x,y)}{x \in \{\mathbf{0}_{H}\}, y \in \STAB(H)}.
\end{align*}
we obtain that the 
H-representation of $C_{H_\star}^0$ is the conjunction
of the H-representations of $\{\mathbf{0}_{H}\}$ and of $\STAB(H)$.
The former consists simply of the equations $r_e = 0$ for each $e \in E(H)$,
zeroing out the first components, which corresponds to the weights of edges already in $H$.
Thus the non-trivial inequalities bounding $C_{H_\star}^0$ are thus the same as the inequalities bounding $\STAB(H)$.

Since $C_{H_\star}^0$ is obtained from  $C_{H_\star}$ by intersecting with the subspace that zeroes the components corresponding to edges in $E(H)$, a complete set of inequalities for $C_{H_\star}^0$ can be obtained from the facet-defining inequalities of $C_{H_\star}$
by disregarding those components, \ie setting the corresponding coefficients to zero.

This process is illustrated by the derivation of the KCBS inequality presented in the main text.
There, the exclusivity graph is the 5-cycle, with neighbouring vertices representing orthogonal projectors.
The graph $H_\star$ is then the 6-vertex wheel graph $W_6$ of \cref{fig:all_graphs_together}--(e).
As shown in the main text,
the KCBS noncontextuality inequality $\sum_a\gamma_a\vert\langle \psi \vert a \rangle \vert^2 \leq \alpha(H,\gamma)$
arises as a $C_{H_\star}^0$ inequality, being obtained from a classicality inequality for the event graph $W_6$ (a facet-defining inequality of $C_{H_\star}$) by setting to zero the coefficients relating to edges already in $H$.

\section{Event graphs and \\ preparation contextuality}\label{app:preparationcontextuality}

In this section, we relate our approach to Spekkens's notion of preparation contextuality.
This may be understood as providing a \stress{theory-independent} perspective
on the use of our formalism to witness quantum coherence.
There, the vertices of event graphs were interpreted as quantum states and the edges as two-state overlaps.
A similar treatment can be carried out
for a certain class of operational theories which support a notion of confusability,
with vertices interpreted as (abstract) preparation procedures.

\psection{Operational probabilistic theories}
Spekkens's notion of generalized contextuality is associated to operational probabilistic theories~\cite{dariano2017quantum,schmid2020structure,kunjwal2019beyondcabello}.
The description of an operational theory starts with a set of basic (operational) physical processes: in the simplest scenarios, one considers preparations and measurements.
One considers experiments consisting of a preparation $P$ followed by a measurement $M$ that returns an outcome $k$.
A probability rule associates a probability $p(k \mid M, P)$
of obtaining outcome $k$ when performing measurement $M$ after the preparation $P$.
More precisely, it associates a probability distribution over outcomes $k$ to each choice of preparation $P$ and measurement $M$.
For a dichotomic measurement $M$, \ie one with only two possible outcomes $0$ and $1$, we simplify notation and write $p(M \mid P)$ for $p(1 \mid M, P)$.
A crucial -- if sometimes overlooked -- aspect is that the full set of procedures includes also classical probabilistic mixtures (\ie convex combinations) of basic procedures, with the probability rule extended accordingly (\ie linearly).

Given an operational theory, one defines an equivalence relation identifying indistinguishable procedures.
Following Ref.~\cite{spekkens2005contextuality}, two preparation procedures are \textit{operationally equivalent}, written $P \simeq P'$, 
if and only if for all measurements $M$ and possible outcomes $k$,
\[
     p(k|M,P) = p(k|M,P') .
\]
A similar definition applies to measurement procedures, but this will not be needed in what follows.

When one treats quantum theory as an operational theory, quantum states $\vert \phi \rangle$ correspond to equivalence classes of operational procedures. For instance, a state $\ket{0}$ may represent preparing a ground state of a nitrogen atom, or preparing the horizontal polarization in photonic qubits. We relax this terminology and refer to `the preparation $P$ associated with a state $\ket{\phi}$', even though strictly speaking $P$ is only an instance of an equivalence class of procedures. 
Such relaxation is safe for our purposes.
In effect, it corresponds to treating pure quantum states as the basic procedures.
The interesting operational equivalences relevant for preparation contextuality
go beyond these, holding between classical mixtures of basic procedures.
For example, in quantum theory, the preparation procedure corresponding of an equal mixture of pure qubit states $\ket{0}$ and $\ket{1}$
is operationally equivalent to that corresponding to an equal mixture of states $\ket{+}$ and $\ket{-}$.
Indeed, both these classical mixtures define the same qubit mixed state, the totally mixed state.

\psection{LSSS operational constraints}
We wish to generalize the situation in which our graph-theoretic framework is used to witness quantum coherence.
There, vertices of an event graph $G$ are interpreted as representing vectors $\{\ket{\phi_i}\}_{i \in V(G)}$ in some Hilbert space $\mathcal{H}$, \ie pure quantum states. Edge weights then correspond to two-state quantum overlaps, $\vert \langle \phi_i \vert \phi_j \rangle \vert^2$. 
Such overlaps can be accessed empirically by \eg measuring one of the states on a measurement basis that includes the other.

Abstracting from this, we consider a situation in which we associate a preparation procedure $P_i$ to each vertex $i \in V(G)$ of a given graph $G$.
But in order to emulate the setup above for more general operational theories,
it is necessary to impose some additional operational constraints.
These constraints distil the aspects of quantum theory that make this work, 
allowing (a theory-independent version of) two-state overlaps.
We shall refer to them as the Lostaglio--Senno--Schmid--Spekkens (LSSS) operational constraints, after Refs.~\cite{Lostaglio2020contextualadvantage,schmid2018discrimination}.
Note that these constraints apply to preparation procedures; we need not assume any operational equivalences for measurement procedures. Therefore, the scenarios under consideration aim to probe preparation contextuality only.

First, for any preparation $P_i$, we assume that there is a corresponding `test measurement' $M_i$ with outcomes $\{0,1\}$
satisfying the operational statistics $p(M_i \mid P_i) = 1$.
In quantum theory, if $P_i$ is the preparation associated with state $\ket{\phi_i}$
then $M_i$ is realised by the projective measurement $\{\ket{\phi_i}\!\bra{\phi_i}, \mathbb{1} - \ket{\phi_i}\!\bra{\phi_i}\}$
where the first projector corresponds to the outcome $k=1$.

Moreover, for any edge $\{i,j\} \in E(G)$, whose incident vertices have preparations $P_i$ and $P_j$, 
we assume that there exists another pair of preparations $P_{i^\perp}$ and $P_{j^\perp}$ satisfying
$p(M_i \mid P_{i^\perp}) = 0$, $p(M_j \mid P_{j^\perp}) = 0$, and the operational equivalence
$\frac{1}{2}P_i + \frac{1}{2}P_{i^\perp} \simeq \frac{1}{2}P_j + \frac{1}{2}P_{j^\perp}$.
In quantum theory, this is always satisfied: given a pair of pure states $\ket{\phi_i}$ and $\ket{\phi_j}$,
one picks $\ket{\phi_{i^\perp}}$ to be the vector orthogonal to $\ket{\phi_i}$
living in the two-dimensional space spanned by $\{\ket{\phi_i},\ket{\phi_j}\}$, and similarly for $\ket{\phi_{j^\perp}}$.

The probabilities $p(M_i\mid P_j)$
are usually called the \stress{confusability}~\cite{lostaglio2020certifying,schmid2018discrimination}, because they may be interpreted as the probability of guessing incorrectly that the preparation performed had been $P_i$ instead of $P_j$.
These probabilities provide a theory-independent, operational treatment of two-state overlaps,
which reduces to the familiar notion in the case of quantum theory viewed as an operational theory:
\[p(M_i \mid P_j)  \stackrel{QT}{=} \Tr\left(\ket{\phi_i}\!\bra{\phi_i}\ket{\phi_j}\!\bra{\phi_j}\right) = \vert \langle \phi_i \vert \phi_j \rangle \vert^2 .\]
Therefore, we use these confusability probabilities to provide edge weights $r_{ij} = p(M_i \mid P_j)$ in our framework.
In summary, an assignment of preparation procedures to the vertices of $G$ such that the LSSS operational constraints are satisfied for the pairs of preparations associated to each edge determines an edge weighting $\fdec{r}{E(G)}{[0,1]}$.

\psection{Preparation noncontextuality}
When faced with an operational theory, a natural question is whether it admits a (noncontextual) hidden-variable explanation,
that is, whether it can be realised by a noncontextual ontological model.
In general, an ontological model consists of a measurable space $(\Lambda, \mathcal{F}_\Lambda)$ of \stress{ontic} states equipped with ontological interpretations for preparation and measurement procedures:
preparation procedures $P$ determine probability measures $\mu_P$ on $\Lambda$,
whereas measurement procedures $M$ determine measurable functions $\xi_M$ mapping each ontic state $\lambda \in \Lambda$ to (a distribution on) outcomes. Note that the interpretation of classical mixtures of procedures must be determined linearly from that of basic procedures, \eg $\mu_{\frac{1}{2}P+\frac{1}{2}Q} = \frac{1}{2}\mu_P+\frac{1}{2}\mu_Q$.
The composition of the interpretations of a preparation and a measurement (going via the ontic space $\Lambda$) is required to recover the empirical or operational predictions, \ie
\[p(\cdot \mid M, P) \;=\; \int_\Lambda \xi_M \, \mathrm{d} \mu_P ,\]
or with variables,
\[p( k \mid M, P) \;=\; \int_\Lambda \xi_M(k \mid \lambda) \, \mathrm{d} \mu_P(\lambda) .\]

Such a realization by an ontological model is said to be noncontextual if operationally equivalent procedures are given the same interpretation.
For preparations, the requirement is that two operationally equivalent preparation procedures determine the same probability measure on $\Lambda$.
We refrain from going into detail on the general definition, as the characterization that follows suffices.

In Refs.~\cite{Lostaglio2020contextualadvantage,schmid2018discrimination}, it was shown that the LSSS constraints imply that any preparation noncontextual model explaining preparation procedures $P_i$ as probability measures $\mu_i$ on $\Lambda$ must satisfy
\begin{equation}\label{equation: noncontextual overlaps}
    p(M_i\mid P_j) 
    = 1 - \|\mu_i - \mu_j\|_{_{\mathsf{TV}}} ,
\end{equation}
where $\|\cdot - \cdot\|_{_{\mathsf{TV}}}$ denotes the total variation distance between probability measures,
given for an arbitrary measurable space $(\Lambda, \mathcal{F}_\Lambda)$ by
\[\|\mu_i - \mu_j\|_{_{\mathsf{TV}}} = \sup_{E \in \mathcal{F}_\Lambda}|\mu_i(E) - \mu_j(E)|.\]
In the case when $\Lambda$ is discrete (which is effectively all we actually need),
this distance is related to the $l_1$ norm\footnote{{In the continuous case, it is often rendered as $\|\mu_i - \mu_j\|_{_{\mathsf{TV}}} = \int_\Lambda \vert \mu_i(\lambda) - \mu_j(\lambda) \vert \,\mathrm{d}\lambda$ in terms of a reference measure such as the Lebesgue measure on the real line.}}:
\begin{align*}
    \|\mu_i - \mu_j\|_{_{\mathsf{TV}}} = \frac{1}{2}\|\mu_i - \mu_j\|_{_1} = \frac{1}{2}\sum_{\lambda \in \Lambda}|\mu_i(\lambda)-\mu_j(\lambda)|.
\end{align*}

We can take that as a \textit{definition} of preparation noncontextual edge weightings. 

\begin{definition}\label{def:PNCedgeweighting}
Let $G$ be an event graph. An edge weighting $\fdec{r}{E(G)}{[0,1]}$ is said to be \stress{preparation noncontextual}
if
the edge weights are of the form in the right-hand side of \cref{equation: noncontextual overlaps}, \ie
 $r_{ij} = 1 - \|\mu_i - \mu_j\|_{_{\mathsf{TV}}}$,
for some choice of 
an (ontic) measurable space $\Lambda$ and of probability measures $\mu_i$ on $\Lambda$ for each vertex $i\in V(G)$.
\end{definition}

\psection{Cycle inequalities witness preparation contextuality}
We now show how in the case of cycle graphs the inequalities derived from our framework serve as witnesses of preparation contextuality for operational theories satisfying the LSSS constraints.

The technical result is stated in the following proposition; it follows from the triangle inequality.

\begin{proposition}
Any inequality bounding the set $C_{C_n}$ cannot be violated by a preparation noncontextual edge weighting (\cref{def:PNCedgeweighting}).
\end{proposition}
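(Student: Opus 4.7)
My plan is to translate the cycle inequality into a statement about the total variation distance and then recognize it as (an iterated) triangle inequality.

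First, I would write $d_{ij} \defeq \|\mu_i - \mu_j\|_{_{\mathsf{TV}}}$, so that a preparation noncontextual edge weighting on $C_n$ takes the form $r_{ij} = 1 - d_{ij}$. Labelling the edges of the cycle $C_n$ as $e_1 = \{1,2\}, e_2 = \{2,3\}, \ldots, e_n = \{n,1\}$, and fixing a distinguished edge $e$, substitute $r_{ij} = 1 - d_{ij}$ into the cycle inequality from \cref{eq:ncycle}:
\begin{align*}
-r_e + \sum_{e' \neq e} r_{e'}
&= -(1 - d_e) + \sum_{e' \neq e}(1 - d_{e'}) \\
&= (n-2) + \Bigl(d_e - \sum_{e' \neq e} d_{e'}\Bigr) .
\end{align*}
Thus the cycle inequality is equivalent to
\[
d_e \;\leq\; \sum_{e' \neq e} d_{e'} .
\]

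The second step is to observe that the total variation distance $\|\cdot - \cdot\|_{_{\mathsf{TV}}}$ is a metric on probability measures; in particular, it satisfies the triangle inequality. Iterating the triangle inequality around the cycle, if $e = \{i, i+1\}$, then
\[
d_e = d_{i,i+1} \;\leq\; d_{i,i-1} + d_{i-1, i-2} + \cdots + d_{i+2, i+1},
\]
where the right-hand side is exactly $\sum_{e' \neq e} d_{e'}$, traversing the cycle the long way round from $i$ to $i+1$. This gives the desired inequality.

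To conclude, I would remark that this argument covers all facet-defining inequalities of $C_{C_n}$: by the result recalled from Ref.~\cite{GalvaoB20} and quoted in \cref{eq:ncycle}, the non-trivial facets of $C_{C_n}$ are precisely of this form (up to relabelling of the distinguished edge), and the trivial facets $0 \leq r_{ij} \leq 1$ are satisfied automatically since $d_{ij} \in [0,1]$. No step looks like a genuine obstacle; the only mild subtlety is to make sure the rewriting carries the sign correctly so that the inequality reduces cleanly to the metric triangle inequality, which is then applied iteratively around the cycle.
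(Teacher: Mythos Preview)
Your proof is correct and takes essentially the same approach as the paper: both reduce the cycle inequality, via the substitution $r_{ij} = 1 - \|\mu_i - \mu_j\|_{_{\mathsf{TV}}}$, to the iterated triangle inequality for the total variation distance around the cycle. The only difference is the order of presentation --- you substitute first and then recognise the triangle inequality, whereas the paper starts from the triangle inequality and rewrites it into the cycle inequality --- and your explicit remark that the trivial facets $0 \le r_{ij} \le 1$ are handled by $d_{ij} \in [0,1]$ is a small addition the paper leaves implicit.
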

\begin{proof}
For simplicity, we use addition modulo $n$ when labelling the vertices of the cycle graph $C_n$, meaning that $i = i+n$.
From the triangle inequality of the norm $\| \cdot \|_{_{\mathsf{TV}}}$ it follows that
\begin{align*}
    & \| \mu_i - \mu_{i+n-1}\|_{_{\mathsf{TV}}}  \\
    = \; & \| \mu_i \underbrace{-\mu_{i+1}+\mu_{i+1}-\dots -\mu_{i+n-2}+\mu_{i+n-2}}_{n-2\text{ zeros}}-\mu_{i+n-1}\|_{_{\mathsf{TV}}}\\
    \leq \; & \| \mu_i - \mu_{i+1}\Vert_{_{\mathsf{TV}}} + \dots + \Vert \mu_{i+n-2}-\mu_{i+n-1}\|_{_{\mathsf{TV}}} .
\end{align*}
Therefore, writing $\|\mu_{i,j}\|_{_{\mathsf{TV}}} \defeq \|\mu_{i}-\mu_j\|_{_{\mathsf{TV}}}$ for clarity,
\[
    \Vert \mu_{i,i+n-1}\Vert_{_{\mathsf{TV}}} - \Vert \mu_{i,i+1}\Vert_{_{\mathsf{TV}}} - \dots - \Vert \mu_{i+n-2,i+n-1}\Vert_{_{\mathsf{TV}}} \leq 0 .
\]
We must now add $1$ to each term to recover the noncontextual overlaps of \cref{equation: noncontextual overlaps}.
We have $n$ terms, but since the first term has a different sign, two of these $1$s will cancel, leaving $n-2$ added to both sides of the inequality:
\begin{align*}
    -1+\Vert \mu_{i,i+n-1}\Vert_{_{\mathsf{TV}}} +1-\Vert \mu_{i,i+1}\Vert_{_{\mathsf{TV}}}  &\\
    +\;\cdots\;  + 1 - \Vert \mu_{i+n-2,i+n-1} 
    \Vert_{_{\mathsf{TV}}}  
    & \;\;\leq\;\; n-2.
\end{align*}
Recalling that $r_{ij} = 1-\Vert \mu_{i,j} \Vert_{_{\mathsf{TV}}}$, we 
recover a cycle inequality for any chosen vertex $i$:
\[
    -r_{i,i+n-1}+r_{i,i+1}+\dots+r_{i+n-2,i+n-1}\leq n-2 .
\]
\end{proof}

We may see this result from two perspectives. We can take a \textit{theory-dependent perspective} and look for what information we can extract assuming quantum theory as the relevant operational theory; this proposition then shows that pure quantum states that violate the $n$-cycle inequalities can be used to construct a proof of quantum preparation contextuality. The construction is done by constructing states and measurements that represent a realization of the prepare-and-measure scenario described by the LSSS constraints. In summary, violations of these inequalities serve as \textit{witnesses of quantum preparation contextuality}.

In light of this result, the experiment of Ref.~\cite{Giordani21} can be understood as an experimental test that witnessed  preparation contextuality of quantum theory; however since the purpose was not to witness preparation contextuality the authors have not experimentally probed the relevant operational equivalences, and have not ruled out loopholes for such a test.

We can also take a \textit{theory-independent perspective}.
If a given operational theory satisfying the LSSS constraints for some cycle graph admits a preparation noncontextual ontological model, then the confusabilities $r_{ij} = p(M_i\vert P_j)$ are bounded by the cycle inequalities.
For instance, the Spekkens Toy Theory~\cite{spekkens2007evidence} satisfies the LSSS constraints for any pair of preparation procedures. Since it admits a noncontextual ontological model, it cannot violate the cycle inequalities.

\bibliography{cohe_context}
\end{document}